\definecolor{cvprblue}{rgb}{0.21,0.49,0.74}
\def\thanks#1{\protected@xdef\@thanks{\@thanks
        \protect\footnotetext{#1}}}
\newcommand{\D}{\operatorname{D}}
\newcommand{\Dg}{\operatorname{D}_{\mathcal{G}}}
\newcommand{\G}{\mathcal{G}}
\newtheorem{theorem}{Theorem}
\newtheorem{proposition}[theorem]{Proposition}
\crefname{section}{Sec.}{Secs.}
\Crefname{section}{Section}{Sections}
\Crefname{table}{Table}{Tables}
\crefname{table}{Tab.}{Tabs.}
\crefname{equation}{}{}
\title{Equivariant plug-and-play image reconstruction}
\author{Matthieu Terris, Thomas Moreau,
\\ Université Paris-Saclay, Inria, CEA
\\ Palaiseau, 91120, France
\and  
Nelly Pustelnik, Julian Tachella\\
ENSL, CNRS UMR 5672\\
Lyon, 69342, France}
\begin{document}
\maketitle

\begin{abstract}
Plug-and-play algorithms constitute a popular framework for solving inverse imaging problems that rely on the implicit definition of an image prior via a denoiser.  
These algorithms can leverage powerful pre-trained denoisers to solve a wide range of imaging tasks, circumventing the necessity to train models on a per-task basis. 
Unfortunately, plug-and-play methods often show unstable behaviors, hampering their promise of versatility and leading to suboptimal quality of reconstructed images.
In this work, we show that enforcing equivariance to certain groups of transformations (rotations, reflections, and/or translations) on the denoiser strongly improves the stability of the algorithm as well as its reconstruction quality. We provide a theoretical analysis that illustrates the role of equivariance on better performance and stability. 
We present a simple algorithm that enforces equivariance on any existing denoiser by simply applying a random transformation to the input of the denoiser and the inverse transformation to the output at each iteration of the algorithm.
Experiments on multiple imaging modalities and denoising networks show that the equivariant plug-and-play algorithm improves both the reconstruction performance and the stability compared to their non-equivariant counterparts.
\end{abstract}

\section{Introduction}

Linear inverse imaging problems are ubiquitous in imaging sciences, famous instances of which include image restoration, magnetic resonance imaging (MRI), computed tomography, and astronomical imaging to name a few.
In this setting, the aim is to recover an image $x\in\mathbb{R}^n$ from measurements $y\in\mathbb{R}^m$ acquired through
\begin{equation}
    y = Ax+\epsilon,
\label{eq:inv_pb}
\end{equation}
where $A\colon \mathbb{R}^n\to \mathbb{R}^m$ is a linear operator and $\epsilon$ the realisation of some random noise.
A myriad of methods for solving problems of the likes of \eqref{eq:inv_pb} have been proposed in the literature, ranging from variational methods \cite{combettes2014forward, chambolle2016introduction} (solution of a cost function minimisation problem) to end-to-end reconstruction with deep neural networks \cite{jin2017deep, knoll2020fastmri, zhang2023ntire} and, more recently, diffusion algorithms \cite{chung2022diffusion, kawar2022denoising, zhu2023denoising, de2021diffusion}. 

In this work, we focus on approaches relying on implicit denoising priors. For instance, plug-and-play (PnP) algorithms
\cite{venkatakrishnan2013plug, ryu2019plug, zhang2021plug, kamilov2023plug} propose to replace the proximity operator involved in variational methods (solved by proximal algorithms) by a denoiser modeling an implicit image prior. Similarly, regularization-by-denoising (RED) approaches propose to replace gradient steps on the prior term by a denoiser \cite{romano2017little}.
While the denoiser is typically trained as a Gaussian denoiser using grayscale or color natural images, it can be plugged into algorithms designed to tackle a wide range of image-related problems, without being constrained by the nature of the input images (e.g. MRI images, CT scans, etc) \cite{ahmad2020plug, liu2022recovery, terris2023image, stergiopoulou2023fluctuation}.  

\begin{figure*}[t]
    \centering
    \includegraphics[width=0.99\linewidth]{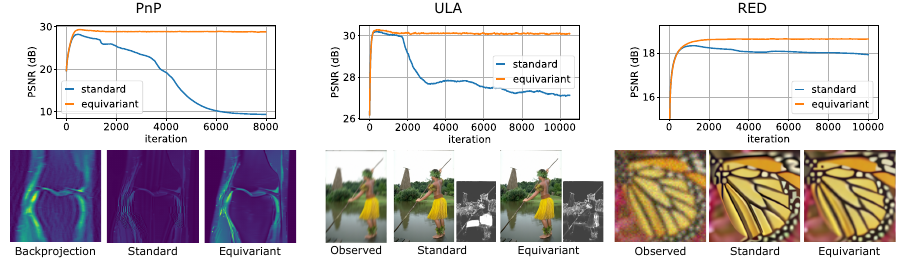}
    \vspace{-1em}
\caption{\textbf{Instability of algorithms relying on implicit denoising priors can be solved by incorporating equivariance.}
Enforcing approximate equivariance of the denoiser at test time allows to both stabilize the algorithm and to improve the reconstruction quality without needing to retrain the implicit prior. \textbf{Left:} PnP algorithm applied to an accelerated MRI problem. \textbf{Middle:} Unadjusted Langevin sampling algorithm for a motion blur problem; estimated mean and variance of the associated Markov chain are displayed. \textbf{Right:} RED algorithm on a $4\times$ super-resolution problem.}
\vspace{-1em}
\label{fig:descriptive_fig}
\end{figure*}

Yet, despite successful performance in various applications, these algorithms tend to suffer from instability issues, making it difficult to transfer across imaging tasks or to derive statistical estimates from the reconstruction. In particular, RED and PnP algorithms often require careful finetuning \cite{rick2017one, meinhardt2017learning}, or departing from the original optimization algorithm for efficient application \cite{zhang2021plug, cohen2021regularization}.
Thus, several works have been proposed to restore the convergence of PnP and RED algorithms while establishing a clear connection with an associated cost function \cite{ryu2019plug, terris2020building, cohen2021regularization, pesquet2021learning, hurault2021gradient}.
However, these efforts introduce notable constraints on the denoiser, leading to a trade-off between stability and reconstruction performance. Typically, convergent PnP and RED algorithms tend to transfer easily to 
new problems, but perform less well than their fine-tuned and early-stopped non-convergent counterparts. 

In this work, we propose to investigate theoretically and empirically the effect of equivariance on algorithms relying on implicit denoising priors.
More precisely, we prove that enforcing equivariance on the denoiser improves the stability of the resulting PnP algorithm and plays a symmetrizing role on the Jacobian of the implicit prior.
Our experiments show that the proposed equivariant approach can significantly improve the quality of images reconstructed with PnP algorithms as well as with popular algorithmic frameworks such as RED or Langevin algorithms.
We give an overview of the possibilities offered by the proposed approach on popular algorithms relying on implicit denoising priors in Figure~\ref{fig:descriptive_fig}.

\section{Related Work}

\paragraph{Stable plug-and-play algorithms}
A significant amount of work has recently been proposed to provide stable PnP algorithms. A popular line of research in that direction consists in constraining the denoiser's architecture to ensure its stability when plugged into a PnP scheme. 
Some approaches propose to regularize the denoiser's training loss with a term penalizing the Lipschitz constant of the denoiser~\cite{ryu2019plug, pesquet2021learning}, which can be combined with architectural constraints \cite{terris2020building, hurault2021gradient}. Other works propose to alter the optimization algorithm itself in order to ensure its stability \cite{meinhardt2017learning, cohen2021regularization, zhang2021plug, hurault2023convergent}. 
In turn, ensuring convergence of the algorithm allows to ensure better transferability to new imaging tasks \cite{terris2023image}, 
but also to perform iteration-intensive tasks, such as sampling
from the posterior distribution \cite{laumont2022bayesian}.
However, all the aforementionned methods come at the cost of either algorithmic modifications or strong constraints on the design of the denoiser, as opposed to the original, denoiser agnostic approaches \cite{venkatakrishnan2013plug, romano2017little}.

\paragraph{Equivariance in imaging inverse problems}
Equivariance to certain transformations, such as rotations or translations, has often been a desired property when designing handcrafted variational priors \cite{selesnick2005dual, condat2017discrete, saydjari2022equivariant}. 
Geometric ensembling techniques have been known in the computer vision literature, where it has been shown to improve reconstruction quality for image super-resolution \cite{timofte2016seven}. 
In the case of linear denoisers, symmetrization of the Jacobian has been shown to improve performance of linear denoisers \cite{milanfar2013symmetrizing}.
Yet, to the notable exception of~\cite{zhang2021plug, terris2023image}, this strategy went unnoticed in the PnP literature, and to the best of our knowledge, the role of equivariance has not been explored in the context of algorithms relying on implicit denoising priors.
In a different line of work, but still in the context of imaging inverse problems, recent works have exploited equivariance for the design of unrolled network architectures~\cite{celledoni2021equivariant}, or the construction of self-supervised learning losses~\cite{chen2021equivariant} (e.g., see the recent review~\cite{chen2023imaging}). 

\paragraph{Equivariant neural networks}
More generally, invariance of the underlying prior appears as a natural assumption in a large number of applications, and there exists a vast literature on building equivariant neural network architectures~\cite{cohen2016group,weiler2019general,bronstein2021geometric}. 
Typical applications involve segmentation on spherical manifolds, robotics, point cloud analysis, data augmentation to name a few \cite{fuchs2020se, chen2021equivariant, muller2021rotation, rommel2022deep, wang2022equivariant, zhu20234d}.
However, these networks often fail to perform as well as other state-of-the-art architectures. Moreover, it can be challenging to incorporate complex layers (e.g., upsampling/downsampling, attention-like layers, etc.) without breaking the equivariance of the resulting network. In this work, we provide a simple method for rendering any denoiser equivariant, without any architectural constraints.

\section{PnP algorithms}

Traditional variational approaches for solving \eqref{eq:inv_pb} consist in reformulating it as a minimization problem. Following a maximum-a-posteriori approach, one can derive an estimate $\widehat{x}$ as
\begin{equation}
    \widehat{x} = \underset{x}{\operatorname{argmin}}\,\, f(x)+ \lambda r(x)
\label{eq:min_pb}
\end{equation}
where $f$ is a data-fidelity enforcing term, $r$ is a regularization enforcing prior knowledge about the solution, and $\lambda>0$ is a regularization parameter.

PnP approaches propose to replace the proximity operator of $r$ \cite{bauschke2017correction} (implicit gradient step)
arising in algorithms for solving \eqref{eq:min_pb}
by a denoiser $\operatorname{D}$~\cite{venkatakrishnan2013plug}. For the standard case of a quadratic data-fidelity term $f(x) = \frac{1}{2}\|Ax-y\|_2^2$, the classical PnP algorithm reads
\begin{equation}
x_{k+1} = \D\left(x_k-\gamma A^{\top} (A x_k-y)\right) \tag{PnP}
\label{eq:pnp_fb}
\end{equation}
where $\gamma >0$ is a stepsize. 

Similarly, regularization by denoising (RED) algorithms approximate the gradient of $r$ as $\nabla r(x)\propto x-\D(x)$ using Tweedie's formula~\cite{romano2017little}. A simple explicit gradient descent optimization based on this definition yields
\begin{equation}
x_{k+1} = x_k - \gamma A^{\top} (A x_k-y)- \gamma \lambda (x_k-\operatorname{D}(x_k)). \tag{RED}
\label{eq:red_gd}
\end{equation}

We also consider the Unadjusted Langevin algorithm (ULA)~\cite{laumont2022bayesian}, which aims to obtain samples associated with the negative log posterior density $-\log p (x|y) \propto f(x) + \lambda r(x)$, and requires adding noise to the iterates in \cref{eq:red_gd}, i.e.,
\begin{multline}
x_{k+1} = x_k - \gamma A^{\top} (A x_k-y)- \gamma \lambda (x_k-\operatorname{D}(x_k)) \\
+ \sqrt{2\gamma} \epsilon_k, \tag{ULA}
\label{eq:ula}
\end{multline}
where $\epsilon_k \sim \mathcal{N}(0,I)$ is a standard Gaussian vector.

While there exist many different variants of \Cref{eq:pnp_fb}, \Cref{eq:red_gd}, and \Cref{eq:ula}, we here focus on their most standard formulations. Interestingly, these algorithms have shown impressive performance on a wide variety of imaging tasks while relying on Gaussian denoisers agnostic to the imaging modality of interest. 
Yet, these algorithms suffer from a lack of stability and potential divergence effects, hurting their versatility. 

\subsection{Proposed equivariant approach}

Intuitively, imaging priors should have some invariance properties with respect to certain groups of transformations, such as rotations, translations, and reflections. We denote transformations associated with a group $\G$ as  $\{T_g\}_{g\in \G}$ where $T_g \in \mathbb{R}^{n\times n}$ is a unitary matrix\footnote{While it is possible to define group actions with non-unitary matrices~\cite{serre1977linear}, here we focus on the unitary matrices, which is the case of translations, rotations and reflections of images.}. We say that $\D$ is equivariant to the group action $\{T_g\}_{g\in\G}$ if $\D(T_gx) = T_g\D(x)$ for all $x$ and $g\in\G$.  At the algorithmic level, this requirement translates into the equivariance of the denoiser with respect to the transforms of interest: if $r(x)$ is a $\G$-invariant function, its proximal operator and gradient (if they exist) are necessarily $\G$-equivariant functions \cite{celledoni2021equivariant}. 

A simple way of rendering any function $\G$-equivariant is by averaging over the group\footnote{This construction is known as Reynolds averaging, see e.g. \cite{serre1977linear}.}. The associated averaged denoiser 
\begin{equation} \label{eq: averaged den}
    \Dg(x) \overset{\text{def}}{=} \frac{1}{|\G|} \sum_{g\in \G} T_g^{-1}\D(T_gx).
\end{equation}
is equivariant by construction. For large groups, or in the case of large denoising architectures $\D$, computing the averaged denoiser might be too computationally demanding. However, in this work, we propose to use a simple Monte Carlo approximation by sampling a single transformation at each step of the algorithm, i.e., 
\begin{equation}
\begin{aligned}
g &\sim \mathcal{G} \\
\widetilde{\operatorname{D}}_{\G}(x) &= T_g^{-1}\operatorname{D}(T_g x).
\end{aligned}
\label{eq:mc_denoise}
\end{equation}

The equivariant counterpart of the \cref{eq:pnp_fb}, \cref{eq:red_gd} and \cref{eq:ula} algorithms is thus simply obtained by replacing the denoiser $\D$ by a sample of the Monte Carlo estimate \eqref{eq:mc_denoise} at each iterate of the algorithm. Explicit versions of these algorithms can be found in the Supplementary Material (SM).

\section{Theoretical analysis}
In this section, we provide a theoretical analysis of the advantages in terms of performance and stability of equivariant denoisers compared to their non-equivariant counterparts. In this section, we denote the Jacobian of the denoiser as $J_x \overset{\text{def}}{=}  \frac{\delta \operatorname{D}}{\delta x}(x)$.

\paragraph{Optimality of equivariant denoisers}
We first show that if the signal distribution is $\G$-invariant, then for any denoiser $\D$, its averaged version \cref{eq: averaged den} obtains an equal or better denoising performance. This can be shown by computing the expected $\ell_2$ error with respect to the signal and noise distributions, i.e.,
\begin{align*}
&\mathbb{E} \, \Big\| \frac{1}{|\G|}\sum_{g\in\G} T_g^{-1}\D (T_g (x+\varepsilon)) - x \Big\| \\
&\leq  \frac{1}{|\G|}  \sum_{g\in\G} \mathbb{E}\, \| T_g^{-1}\D(T_g (x+\varepsilon) - x \|  \\
&\leq  \frac{1}{|\G|}  \sum_{g\in\G} \mathbb{E}\, \| T_g^{-1}\D(T_g T_g^{-1}(x+\varepsilon)) - T_g^{-1}x \|  \\
&\leq \mathbb{E}\, \| \D(x+\varepsilon) - x \| 
\end{align*}
where the expectation is taken with respect to $x$ and $\varepsilon$, and where
the second line uses the triangle inequality and the third line uses that (i) $\mathbb{E} h(x) =\mathbb{E} h(T_g^{-1}x)$ for any function $h:\mathbb{R}^{n}\mapsto \mathbb{R}$ and $T_g$ in the group action if the distribution of $x$ and $\varepsilon$ is $\G$-invariant, and that (ii) the transformations $T_g$ are isometries.

\paragraph{Existence of an explicit prior}

A necessary condition for a denoiser to be associated with an explicit (PnP or RED) prior is to have a symmetric Jacobian, i.e., $J_x = J_x^{\top}$, see~\cite[Theorem 1]{reehorst2018regularization}. Unfortunately, most state-of-the-art denoisers do not exhibit this property. Averaging a denoiser over a sufficiently large group can lead to symmetric Jacobians. In particular, if the denoiser is linear and $\G$ includes translations and reflections, then the denoiser is assured to have a symmetric Jacobian:

\begin{proposition}
Any linear denoiser $\D$ that is equivariant to the action of 2-dimensional shifts, and vertical and horizontal reflections, has a symmetric Jacobian.
\end{proposition}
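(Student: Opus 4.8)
The plan is to use the two kinds of symmetry for two different purposes: shift-equivariance to pin down the \emph{structure} of the denoiser, and the two reflections to pin down the \emph{symmetry} of that structure. Since $\D$ is linear I would first write $\D(x) = Dx$ for a fixed matrix $D$; then $J_x = D$ is constant and independent of $x$, so the entire claim reduces to proving the single matrix identity $D = D^{\top}$.

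The first step is to translate shift-equivariance into a convolution. The $2$-dimensional cyclic shifts act by unitary permutation matrices, and a linear map that commutes with all of them is, by the standard fact that such operators are simultaneously diagonalized by the $2$D DFT, a (circular) convolution. Hence $D$ is governed by a single kernel $h$, with $(Dx)[i,j] = \sum_{k,l} h[k,l]\, x[i-k,\,j-l]$, where indices are taken modulo the image dimensions. This reduces everything about $D$ to a statement about $h$.

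The second step is to express the transpose through $h$: computing $\langle Dx, y \rangle$ directly shows that $D^{\top}$ is convolution with the point-reflected kernel $h[-k,-l]$, so that $D = D^{\top}$ is \emph{equivalent} to the single kernel identity $h[k,l] = h[-k,-l]$, i.e. invariance of $h$ under a $180^{\circ}$ rotation about the origin. The third step extracts exactly this identity from the reflections. Writing horizontal-reflection equivariance as $D R_{\mathrm{h}} = R_{\mathrm{h}} D$ (with $R_{\mathrm{h}}x[i,j] = x[i,-j]$) and matching the coefficient of each $x[p,q]$ on the two sides gives $h[k,l] = h[k,-l]$; the vertical reflection gives analogously $h[k,l] = h[-k,l]$. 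Composing the two symmetries yields $h[k,l] = h[-k,l] = h[-k,-l]$, which is precisely the condition from the second step, and therefore $D = D^{\top}$.

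The main obstacle I anticipate is not conceptual but bookkeeping: one must fix consistent conventions for how the reflection and shift permutations act on the index pair $(i,j)$, and then re-index the convolution sums carefully so that the coefficient of a fixed $x[p,q]$ can be read off unambiguously on each side of the equivariance identity. The underlying group theory is routine, and the only place an error can creep in is a stray sign or a swapped index, so I would carry out the coefficient matching explicitly rather than invoking symmetry informally.
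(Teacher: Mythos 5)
Your proof is correct and follows essentially the same route as the paper's: shift-equivariance makes $D$ a circular convolution (circulant matrix), the transpose is convolution with the point-reflected kernel, and the two reflection equivariances combine to force the kernel to be even, hence $D = D^{\top}$. You simply spell out the coefficient-matching that the paper states more tersely ("$d$ is even"), so there is nothing substantively different to flag.
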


\begin{proof}
    Let $\D_{\G}(x)=Mx$ with Jacobian $M\in\mathbb{R}^{n\times n}$. A matrix $M$ that is equivariant to the action of 2-dimensional has a circulant form, i.e., $M=\text{circ}(d)$ where $d\in\mathbb{R}^n$ is a filter. Thus the transposed Jacobian is also a circulant matrix  $M^{\top} = \text{circ}(d')$ where $d'\in \mathbb{R}^n$ is the transposed filter. Since $D$ is also equivariant to vertical and horizontal reflections, we have that $d$ is even and real, and thus $M=M^{\top}$. 
\end{proof}

While this result applies only to linear denoisers, the symmetry of non-linear denoisers also improves when incorporating equivariance: \Cref{tab: symmetry jac} shows the relative symmetry error $\| J_x - J_x^{\top}\|_{\mathrm{F}}^2/\|J_x\|_{\mathrm{F}}^2$ of popular non-linear denoisers averaged over 10 different patches of $64 \times 64$ pixels. The $\G$-equivariant denoisers $\Dg$ have significantly smaller errors than their non-equivariant counterparts.

\begin{table}[h]
\centering
\footnotesize
\begin{tabular}{@{\hskip 1pt}l @{\hskip 15pt} c @{\hskip 7pt} c @{\hskip 7pt} c @{\hskip 7pt} c @{\hskip 7pt} c}
\toprule
 & \begin{tabular}[c]{@{}c@{}}Lipschitz\\ DnCNN\end{tabular} & DnCNN & SCUNet & SwinIR & \begin{tabular}[c]{@{}c@{}}DRUNet\\ ($\sigma_d$=0.01)\end{tabular} \\  %
 \hline
Standard $\operatorname{D}$ & 0.014 & 0.022 & 0.954 & 0.604 & 0.030 \\ 
Equivariant $\Dg$ & 0.003 & 0.005 & 0.710 & 0.291 & 0.008 \\ 
\bottomrule
\end{tabular}
\vspace{-0.5em}
\caption{Mean Jacobian Symmetry error $\|J_x-J_x^{\top} \|_F^2/\|J_x\|_F^2$. Equivariant denoisers are obtained by averaging over the group of 90-degree rotations and reflections.}
\vspace{-2em}
\label{tab: symmetry jac}
\end{table}

\paragraph{Lipschitz constant of the denoiser}
The stability of PnP algorithms depends crucially on the Lipschitz constant of the denoiser~\cite{ryu2019plug, kamilov2023plug}. For example, if the Lipschitz constant of the denoiser is lower than 1, both the \cref{eq:pnp_fb} and \Cref{eq:red_gd} iterates converge under a good choice of step size. 
Since the Lipschitz constant of the sum of two mappings is smaller or equal than the sum of their Lipschitz constants,
we have that the Lipschitz constant of the averaged equivariant denoiser is necessarily equal or lower than the non-equivariant one.  If we restrict to linear denoisers, we can show that the equivariant denoiser will have a strictly smaller constant, as long as the dominant singular vector is not equivariant:

\begin{proposition}
     Let $\D(x)=Mx$ be a linear denoiser with singular value decomposition $M=\sum_{i=1}^{n} \lambda_i u_i v_i^{\top}$ and $\lambda_1 > \lambda_2 \geq \dots \geq \lambda_n \geq 0$. If the principal component $u_1v_1^{\top}$ is not $\G$-equivariant, the averaged denoiser $\operatorname{D}_{\G}$ has a strictly smaller Lipschitz constant than $\operatorname{D}$. 
\end{proposition}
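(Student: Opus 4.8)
The plan is to first note that since $\D$ is linear, its averaged version is linear too, with matrix
\[
M_{\G} = \tfrac{1}{|\G|}\sum_{g\in\G} T_g^{-1} M T_g,
\]
and that the Lipschitz constant of a linear map equals its spectral (operator) norm. So the claim $\mathrm{Lip}(\Dg) < \mathrm{Lip}(\D)$ is exactly $\|M_{\G}\|_2 < \|M\|_2 = \lambda_1$. I would dispense with the non-strict bound immediately, as in the paragraph preceding the proposition: since each $T_g$ is unitary, conjugation preserves the operator norm, so $\|M_{\G}\|_2 \le \tfrac{1}{|\G|}\sum_g \|T_g^{-1}MT_g\|_2 = \lambda_1$. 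The entire content of the statement is therefore the \emph{strictness}, which I would prove by contraposition: assuming $\|M_{\G}\|_2 = \lambda_1$, I will show the principal component $u_1 v_1^{\top}$ must be $\G$-equivariant.

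Next I would unpack the equality case. Let $x$ be a unit vector attaining $\|M_{\G} x\| = \lambda_1$, and examine the chain
\[
\lambda_1 = \Big\| \tfrac{1}{|\G|}\sum_{g} T_g^{-1} M T_g x \Big\| \le \tfrac{1}{|\G|}\sum_{g} \|M T_g x\| \le \lambda_1,
\]
which uses that the $T_g$ are isometries and $\|M\|_2 = \lambda_1$. Forcing equality throughout yields two conditions. First, each summand obeys $\|M T_g x\| = \lambda_1 = \lambda_1\|T_g x\|$, so every $T_g x$ is a top right singular vector of $M$; because the hypothesis $\lambda_1 > \lambda_2$ makes the leading singular value simple, this singular space is exactly the line spanned by $v_1$, whence $T_g x = \pm v_1$ for each $g$. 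Taking $g = e$ gives $x = \pm v_1$, so without loss of generality $x = v_1$ and $T_g v_1 = \epsilon_g v_1$ with $\epsilon_g \in \{\pm 1\}$. Second, the triangle inequality must be tight, and since all summands share the common norm $\lambda_1$, this forces the vectors $T_g^{-1} M T_g v_1$ to be identical across $g$.

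Finally I would convert these constraints into equivariance of $u_1 v_1^{\top}$. Substituting $M v_1 = \lambda_1 u_1$ gives $T_g^{-1} M T_g v_1 = \epsilon_g \lambda_1\, T_g^{-1} u_1$; the tightness condition says these coincide for all $g$, so comparing with the $g=e$ term $\lambda_1 u_1$ yields $T_g^{-1} u_1 = \epsilon_g u_1$, i.e. $u_1$ is a $\pm 1$ eigenvector of $T_g$ carrying the \emph{same} sign $\epsilon_g$ as $v_1$. Using $T_g^{\top} = T_g^{-1}$ one then checks $T_g^{-1}(u_1 v_1^{\top}) T_g = (T_g^{-1}u_1)(v_1^{\top}T_g) = (\epsilon_g u_1)(\epsilon_g v_1^{\top}) = u_1 v_1^{\top}$, so the rank-one map $u_1 v_1^{\top}$ is $\G$-equivariant, contradicting the hypothesis. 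I expect the sign bookkeeping in this last step — introducing the signs $\epsilon_g$ for $v_1$ and deducing the matching signs for $u_1$ from tightness of the triangle inequality — to be the only delicate point; everything else reduces to the simplicity of $\lambda_1$ and the fact that the $T_g$ are isometries.
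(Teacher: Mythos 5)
Your proof is correct, and it follows the same overall strategy as the paper's — both arguments come down to analyzing when the triangle inequality for the operator norm of the group average $\frac{1}{|\G|}\sum_g T_g^{-1} M T_g$ is tight, and both conclude by contraposition that tightness would force $u_1v_1^{\top}$ to be $\G$-equivariant. The difference is in execution. The paper leans on an asserted lemma: that $\|\frac{1}{p}\sum_g A_g\| = \frac{1}{p}\sum_g \|A_g\|$ \emph{if and only if} all the $A_g$ share the same leading left and right singular vectors; this is stated without proof, and as written it is slightly imprecise (the ``only if'' direction requires the top singular value to be simple, and ``same leading singular vectors'' must be read up to a consistent sign, since $u_1v_1^{\top} = (-u_1)(-v_1)^{\top}$). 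You instead prove exactly the direction of that lemma the proposition needs, from first principles: picking a norming vector, forcing equality through the chain $\lambda_1 \le \frac{1}{|\G|}\sum_g\|MT_gx\| \le \lambda_1$, invoking the simplicity hypothesis $\lambda_1 > \lambda_2$ to pin $T_gx$ to $\pm v_1$, and then using tightness of the triangle inequality among equal-norm vectors to match the signs on $u_1$ with those on $v_1$. This makes your argument self-contained and makes explicit where the hypothesis $\lambda_1 > \lambda_2$ is actually used — something the paper's proof never surfaces. The paper's version is shorter and arguably more transparent about the geometric picture (each conjugate $T_gMT_g^{-1}$ has principal component $T_gu_1v_1^{\top}T_g^{-1}$, and non-equivariance means two of these differ), but it defers the real work to the unproved characterization; your write-up supplies precisely that missing work.
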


\begin{proof}
    For any $p$ matrices $A_1,\dots,A_p$, we have that $\|\frac{1}{p}\sum_{g=1}^{p} A_g\| = \frac{1}{p} \sum_{g=1}^p\|A_g\|$ if and only if all matrices share the same leading left and right singular vectors. The $\G$-averaged denoiser can be written as $\Dg= \frac{1}{|\G|}\sum_{g=1}^{|\G|} A_g$ where $A_g:=T_gMT_g^{-1}$. We have that $A_g$ has the same singular values as $M$ since singular vectors are defined as  $u_i'=T_gu_i$ and $v'=T_gv_i$ for $i=1,\dots,n$. Since $u_1v_1^{\top}$ is not equivariant, we have that $T_gu_1v_1^{\top}T_g^{-1}\neq u_1 v_1$ for some $g\in \G$. Thus, there exist at least 2 terms in the sum $\sum_g T_g M T_g^{-1}$ which do not share the same leading singular vectors, and consequently $\|M\|> \|\frac{1}{|\G|}\sum_g T_g M T_g^{-1} \|$.
\end{proof}

In practice, we observe a significantly smaller constant for most popular non-linear denoisers. \Cref{tab: lip constant} shows the local Lipschitz constant (i.e., the spectral norm of the Jacobian) of various denoisers averaged over 10 different patches of $64 \times 64$ pixels. The averaged denoiser can have a significantly smaller constant.

\begin{table}[h]
\centering
\footnotesize
\begin{tabular}{@{\hskip 1pt}l @{\hskip 15pt} c @{\hskip 7pt} c @{\hskip 7pt} c @{\hskip 7pt} c @{\hskip 7pt} c}
\toprule
 & \begin{tabular}[c]{@{}c@{}}Lipschitz\\ DnCNN\end{tabular} & DnCNN & SCUNet & SwinIR & \begin{tabular}[c]{@{}c@{}}DRUNet\\ ($\sigma_d$=0.01)\end{tabular} \\ \hline % 
 Standard $\operatorname{D}$ & 1.06 & 1.44 & 5.78 & 6.28 & 1.57 \\ 
Equivariant $\Dg$ & 0.92 & 1.18 & 4.19 & 4.05 & 1.26 \\ 
\bottomrule
\end{tabular}
\vspace{-0.5em}
\caption{Local Lipschitz constant of the denoiser averaged over 16 image patches. Equivariant denoisers are obtained by averaging over the group of 90 degree rotations and reflections.}
\vspace{-2em}
\label{tab: lip constant}
\end{table}

\paragraph{Interplay between the group action and forward operator}
So far we have focused on the properties associated with an equivariant denoiser, however,  the (lack of) equivariance of $A$ also plays an important role in the convergence of PnP algorithms.
The iterates in \Cref{eq:pnp_fb} can converge even for denoisers with Lipschitz constant larger than 1, as long as the Lipschitz constant of the composition 
\begin{equation}
\label{eq: lip DA}
    \operatorname{D}\circ \, (I-\gamma A^{\top}A)
\end{equation}
is smaller than 1. If the spectra of the matrices $A^{\top}A$ and $\operatorname{D}$ are incoherent, i.e., $A^{\top}A$ and the Jacobian of the denoiser are diagonalized in different bases, the Lipschitz constant of \Cref{eq: lip DA} is likely to be smaller than that of $\operatorname{D}$. A similar stability phenomenon happens for the \Cref{eq:red_gd} and \Cref{eq:ula} iterates due to the incoherence between the spectrum of the forward operator and the one of the equivariant denoiser. 
The following proposition formalizes this intuition in the case of linear denoisers:

\begin{proposition}
\label{prop:composition}
Let $x$ be a grayscale image and $\{T_g\}_{g\in\G}$ be a group of transformations that includes 2-dimensional shifts and $\operatorname{D}_{\G}$ be a linear $\G$-equivariant denoiser. If $A^{\top}A$ is not $\G$-equivariant, it does not share the same singular vectors as the Jacobian of $\operatorname{D}$.
\end{proposition}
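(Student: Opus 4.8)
The plan is to prove the contrapositive: if $A^{\top}A$ shares the same singular vectors as the Jacobian $M$ of $\Dg$, then $A^{\top}A$ must itself be equivariant. The starting point is the structural fact already exploited in the first proposition. Since $\Dg(x)=Mx$ is linear and equivariant to the subgroup of 2-dimensional shifts contained in $\G$, the matrix $M$ commutes with every shift operator $T_g$, and hence is block-circulant with circulant blocks. Such matrices are simultaneously diagonalized by the 2-dimensional discrete Fourier transform $F$, i.e. $M = F^{*}\Lambda F$ with $\Lambda$ diagonal. Because $M$ is normal, its left and right singular vectors are, up to a phase, exactly the columns of $F^{*}$, namely the Fourier modes.

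First I would record that $A^{\top}A$ is symmetric and positive semidefinite, so its singular vectors coincide with its eigenvectors. If these are assumed to coincide with the singular vectors of $M$, then $A^{\top}A$ is diagonalized by the same Fourier basis, $A^{\top}A = F^{*}\Sigma F$ with $\Sigma$ diagonal and nonnegative. This is precisely the statement that $A^{\top}A$ is circulant, hence commutes with every 2-dimensional shift, i.e. it is equivariant to the shift subgroup of $\G$. Taking the contrapositive then yields the claim that non-equivariance of $A^{\top}A$ precludes a shared singular basis.

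The hard part will be two subtleties hidden in the phrase ``share the same singular vectors''. The conclusion quantifies over the full group $\G$, whereas the Fourier argument above only produces equivariance to the shift subgroup; bridging this gap requires either restricting the relevant notion of equivariance to shifts or additionally using the reflection/rotation structure of $\Dg$. The more delicate issue is the degeneracy of the singular spectrum of $M$: for any real circulant matrix the conjugate symmetry $\lambda_{-k}=\overline{\lambda_k}$ forces the singular values to occur in equal pairs (and reflection equivariance, which makes the underlying filter even, only increases this multiplicity). On such a degenerate eigenspace $M$ acts as a scalar, so every orthonormal basis of the eigenspace is a valid set of singular vectors of $M$, and the hypothesis can be satisfied trivially (taking $M=I$ already breaks the literal statement for any $A$). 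To make the argument rigorous I would therefore assume, as in the preceding proposition, that the dominant singular structure of $M$ is non-degenerate, or equivalently read ``shares the same singular vectors'' as ``admits a common Fourier eigenbasis that forces circulant structure on $A^{\top}A$''. Under such a genericity assumption the co-diagonalization step pins down $A^{\top}A$ as circulant, completing the contradiction.
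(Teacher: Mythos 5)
Your argument is essentially the paper's own: both rest on the characterization that shift-equivariance of a linear map is equivalent to diagonalization by the 2D Fourier transform, so the shift-equivariant $M$ is Fourier-diagonal while the non-equivariant $A^{\top}A$ is not, precluding a common singular basis; you merely state it contrapositively. Your caveats about spectral degeneracy (e.g.\ $M=I$ sharing singular vectors with everything) and about full-group versus shift-subgroup equivariance are fair observations on the looseness of the statement, but the paper's proof elides them too, so they do not mark a difference in approach.
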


\begin{proof}
Let $\D(x)=Mx$ with Jacobian $M\in\mathbb{R}^{n\times n}$.  A matrix $B\in\mathbb{R}^{n\times n}$ is $\G$-equivariant to the action of 2-dimensional shifts if and only if it admits a diagonal decomposition as $B=F \text{diag}(d) F^{*}$ where $F$ is the 2-dimensional Fourier transform, see e.g.~\cite[Section 4.1]{tachella2023sensing}. Thus, if $\operatorname{D}$ is $\G$-equivariant, then $M$ is diagonal in the Fourier domain, whereas the non-equivariant $A^{\top}A$ matrix does not admit such a diagonalization.
\end{proof}

\Cref{tab: lips constant AD} shows the local Lipschitz constant of the mapping in \Cref{eq: lip DA} for various popular non-linear denoisers, where $A$ is a random inpainting operator, which is not equivariant to shifts, rotations nor reflections. The constants are smaller than those shown in~\Cref{tab: lip constant} and are below 1, ensuring contraction of the PnP iterates for that specific operator. The equivariant denoisers show smaller constants than the non-equivariant counterparts.

\begin{table}[h]
\centering
\footnotesize
\begin{tabular}{@{\hskip 1pt}l @{\hskip 15pt} c @{\hskip 7pt} c @{\hskip 7pt} c @{\hskip 7pt} c @{\hskip 7pt} c}
\toprule
 & \begin{tabular}[c]{@{}c@{}}Lipschitz\\ DnCNN\end{tabular} & DnCNN & SCUNet & SwinIR & \begin{tabular}[c]{@{}c@{}}DRUNet\\ ($\sigma_d$=0.01)\end{tabular} \\ \hline % 
Standard & 0.91  & 0.91 &  0.62 & 0.69 & 0.83 \\ 
Equivariant   & 0.79  & 0.78 & 0.52 & 0.67  &  0.70 \\ 
\bottomrule
\end{tabular}
\vspace{-0.5em}
\caption{Local Lipschitz constant of PnP iteration $\|J_x(I- A^{\top}A)\|$, with $A$ a random inpainting operator. Equivariant denoisers are obtained by averaging over the group of 90 rotations and reflections.}
\label{tab: lips constant AD}
\vspace{-2em}
\end{table}

\subsection{Non-linear example}
We demonstrate some of the properties analysed in the previous subsection on a non-linear setting of a  neural network denoiser with a single hidden layer.
More precisely, we consider the case where $\operatorname{D}$ is a slight perturbation of a proximity operator, i.e.
\begin{equation}
    \operatorname{D}(x) = B_2 \operatorname{prox}_{\gamma \lambda \|\cdot\|_1}(B_1 x)
\label{eq:approx_prox}
\end{equation}
where $B_1\!\in\!\mathbb{R}^{n\times n}$ 
satisfies $B_1 B_1^\top = B_1^\top B_1 = I$ and where $B_2 = (B_1^\top + P)$ where $P\in\mathbb{R}^{n\times n}$ is a small random perturbation. In particular, if $P=0$, then the denoiser is a well-defined proximal operator, i.e. $\operatorname{D}(x)=\operatorname{prox}_{\gamma \lambda \|B_1 \cdot \|_1}(x)$, with a well-defined prior $r(x)=\|B_1x\|_1$. We stress that, for $P\neq0$, there exists a priori no loss function $g$ associated with the~\cref{eq:pnp_fb} iterates with the denoiser in \eqref{eq:approx_prox}. Furthermore, we assume that $B_1$ is $\G$-equivariant.

Using the fact that $B_1$ and $\operatorname{prox}_{\gamma \lambda \|\cdot\|_1}$ are $\G$-equivariant functions, we can write the $\G$-equivariant denoiser as (see a detailed derivation in the SM)
\begin{align*}
\Dg(x) &= (B_1^{\top}+P_{\G})\operatorname{prox}_{\gamma \lambda \|\cdot\|_1}(B_1 x) 
\end{align*}
where $P_{\G} = \frac{1}{|\G|}\sum_{g\in\G} T_g^{-1}PT_g$ is the $\G$-averaged perturbation. If the original perturbation is not equivariant we have $0 \leq \|P_{\G}\|_F^2 < \|P\|_F^2$, and the equivariant denoiser will be closer to the proximal operator $\operatorname{prox}_{\gamma \lambda \|B_1 \cdot \|_1}(x)$.

Figure~\ref{fig:toy_exp} illustrates the behaviour of the \Cref{eq:pnp_fb} sequence with and without the group averaging in~\cref{eq: averaged den} for two specific choices of $A$ in a 2D toy example, where $\G$ is the group of flips (see the SM for more details). In both cases, the algorithm involving the equivariant denoiser converges to a point close to the global minima associated with the prior $\lambda r(x)=\lambda \|B_1x\|_1$, whereas sequence generated by the non-equivariant algorithm diverges.

\begin{figure}
\small
\begin{minipage}{0.45\textwidth}
    \includegraphics[width=1.0\textwidth]{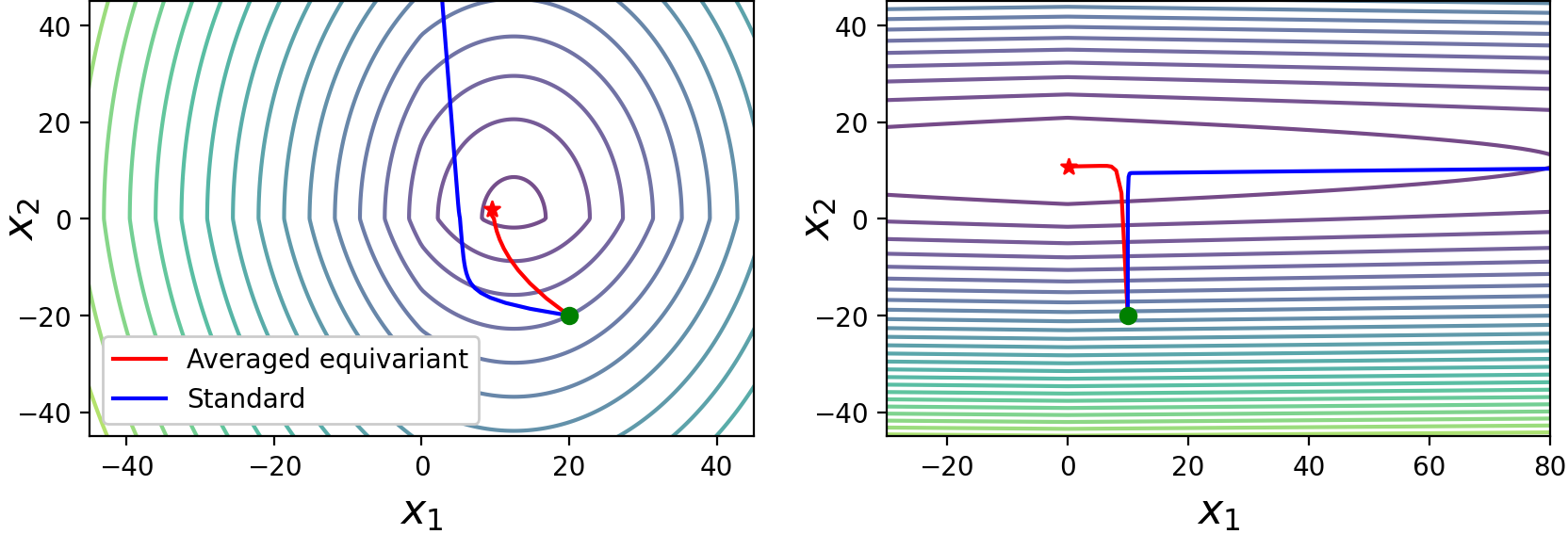}
\end{minipage}
\vspace{-1em}
\caption{Behaviour of the \Cref{eq:pnp_fb} algorithm with an approximated proximity operator (blue curve) and its equivariant counterpart (red curve). Contour lines show the loss in~\cref{eq:min_pb} with $ r(x)=\|B_1x\|_1$. Stars denote the limit point of each sequence (when it exists) and green dots show the initialization points.}
\label{fig:toy_exp}
\vspace{-1.5em}
\end{figure}

\begin{table*}[htbp]
\centering
\footnotesize
\resizebox{0.7\textwidth}{!}{
\begin{tabular}{l@{\hskip 15pt}cccccc}
\toprule
& \multicolumn{2}{c}{\textbf{Deblur (motion)}} & \multicolumn{2}{c}{\textbf{Deblur (Gaussian)}} & \multicolumn{2}{c}{\textbf{MRI}} \\
 & Set3C & BSD10 & Set3C & BSD10 & $\times$4 & $\times$8 \\
 \cmidrule(lr){2-3}
\cmidrule(lr){4-5}
\cmidrule(lr){6-7}
LipDnCNN \cite{pesquet2021learning} & $31.7 \pm 2.0$ & $30.9 \pm 0.7$ & $32.1 \pm 2.5$ & $32.6 \pm 0.9$ & $30.6 \pm 2.6$ & $26.4 \pm 2.3$\\
Eq. LipDnCNN & $31.8 \pm 2.0$ & $31.1 \pm 0.6$ & $32.1 \pm 2.5$ & $32.7 \pm 0.9$ & $30.7 \pm 2.7$ & $26.6 \pm 2.2$\\
 \cmidrule(lr){2-3}
\cmidrule(lr){4-5}
\cmidrule(lr){6-7}
DnCNN & $29.9 \pm 1.9$ & $30.4 \pm 0.3$ & $22.0 \pm 6.0$ & $29.6 \pm 4.6$ & $28.6 \pm 4.3$ & div.\\
Eq. DnCNN & $30.7 \pm 1.6$ & $30.9 \pm 0.1$ & $29.8 \pm 3.1$ & $33.0 \pm 0.6$ & $30.1 \pm 3.9$ & $24.6 \pm 2.9$\\
 \cmidrule(lr){2-3}
\cmidrule(lr){4-5}
\cmidrule(lr){6-7}
DRUNet & $10.1 \pm 1.5$ & $16.5 \pm 9.8$ & $14.6 \pm 8.8$ & $17.5 \pm 10.5$ & $27.7 \pm 4.0$ & $16.9 \pm 8.6$\\ % $-21.2 \pm 33.6$
Eq. DRUNet & $18.0 \pm 6.0$ & $28.4 \pm 2.3$ & $25.3 \pm 10.1$ & $31.9 \pm 1.1$ & $30.7 \pm 3.0$ & $22.0 \pm 6.0$ \\
\hline
wavelets & $29.6 \pm 2.0$ & $27.2 \pm 1.0$ & $31.2 \pm 2.5$ & $30.9 \pm 0.8$ & $28.6 \pm 2.1$ & $25.4 \pm 1.8$\ \\
TGV & $29.2 \pm 2.1$ & $26.7 \pm 1.3$ & $30.7 \pm 2.5$ & $30.5 \pm 0.9$ & $28.5 \pm 2.3$ & $24.8 \pm 2.1$\\
GSPnP \cite{hurault2021gradient} & $34.6\pm 0.2$ & $33.5 \pm 1.9$ & $35.1\pm 0.7$ & $31.4 \pm 3.3$ & n/a & n/a\\
DPIR \cite{zhang2021plug} & $33.9 \pm 1.9$ & $35.0 \pm 0.5$ & $33.0 \pm 2.5$ & $34.6 \pm 0.7$ & $28.4 \pm 2.2$ & $25.1 \pm 1.9$ \\
\bottomrule
\end{tabular}
}
\vspace{-0.6em}
\caption{Mean reconstruction PSNR for various image restoration problems using the \cref{eq:pnp_fb} algorithm with different backbone denoisers. The first, third, and fifth rows present results for non-equivariant denoisers (LipDnCNN, DnCNN, and DRUNet, respectively), while their equivariant counterparts (Eq. LipDnCNN, Eq. DnCNN, and Eq. DRUNet) are shown in the second, fourth, and sixth rows. The bottom four rows offer benchmarks with standard reconstruction methods. The notation "div." indicates cases where the method diverged.}
\label{table:metrics}
\vspace{-1.6em}
\end{table*}

\section{Experimental results}

In this section, we evaluate the influence of the proposed equivariant approach for different algorithms and linear inverse imaging problems. Our implementation\footnote{available at \href{https://github.com/matthieutrs/EquivariantPnP}{https://github.com/matthieutrs/EquivariantPnP}.} relies on the DeepInverse library \cite{deepinv}.

\subsection{Problems considered}

\paragraph{Image deblurring and image super-resolution} In this setting, we set $y=h*x+\epsilon$ in \eqref{eq:inv_pb} where $h$ is a convolutional kernel and $*$ the circular convolution. We consider either Gaussian deblurring, in which case $h$ is a Gaussian kernel of standard deviation 1, and motion deblurring, in which case $h$ is the first kernel from \cite{levin2009understanding}. Unless specified otherwise, $\epsilon$ is a Gaussian noise with standard deviation $0.01$.
In the image super-resolution (SR) setting, \eqref{eq:inv_pb} writes as $y = \left(h*x \right)_{\Downarrow S}+\epsilon$ where $h$ is a Gaussian kernel of standard deviation 1, and $S$ denotes the undersampling factor. When $S=2$ (resp. $S=4$), $\epsilon$ is a Gaussian noise with standard deviation $0.01$ (resp. $0.05$). We test the proposed method on the Set3C dataset as well as the BSD10 dataset, a subset of 10 images from the BSD68 dataset.

\paragraph{MRI} In this setting, we consider $y=  MFx$ in \eqref{eq:inv_pb} where $M$ is a binary mask and $F$ the 2D Fourier transform. Following \cite{zbontar2018fastmri}, we consider the $\times$4 and $\times$8 acceleration factors. In contrast to the previous problem, no noise is added to the measurements in this case. We test the method on a subset of 10 images from the validation set of the fully acquired k-space data of \cite{zbontar2018fastmri}.

\paragraph{Algorithms and backbone denoisers} We consider several backbone pretrained denoiser, namely DRUNet \cite{zhang2021plug}, SCUNet \cite{zhang2022practical}, SwinIR \cite{liang2021swinir}, DiffUNet \cite{ho2020denoising}, DnCNN \cite{zhang2017beyond} as well as its 1-Lipschitz version (LipDnCNN) from \cite{pesquet2021learning} and the gradient-step denoiser GSNet \cite{hurault2021gradient}. These architectures are representative of state-of-the-art image reconstruction architectures, involving both convolutional and attention layers. 
We stress that none of these networks is equivariant to translations or rotations, to the exception of DnCNN that show approximate translation equivariance.
The choice of the backbone denoiser may influence the chosen algorithm. For instance, the DnCNN and SwinIR denoisers on which this article relies are trained for a fixed level of noise, limiting the ability to finetune the algorithm. 
Moreover, DiffUNet and SCUNet architectures can only be applied to color images, preventing their usage on the MRI problem. 
Unless specified, each algorithm is run for $10^4$ iterations.

\paragraph{Baselines} In this paper, we mainly investigate the influence of equivariant priors on the stability of PnP algorithms. We thus compare the proposed approach with variational approaches which can be seen as a class of convergent PnP algorithms. In particular, we use wavelet denoisers (i.e. $\D(x)=\operatorname{prox}_{\lambda \| \Psi \cdot \|_1}(x)$ for $\Psi$ a redundant wavelet dictionary), as well as total generalized variation (TGV) denoisers \cite{bredies2010total}.
The \Cref{eq:pnp_fb} algorithm with Lipschitz denoiser corresponds to the method from \cite{pesquet2021learning} which ensures convergence of the \Cref{eq:pnp_fb} algorithm.
We also compare our approach with the state-of-the-art DPIR algorithm \cite{zhang2021plug} which runs a small number of steps of a half-quadratic splitting algorithm with fine-tuned decaying stepsizes. As a consequence, DPIR can be seen as representative of non-convergent, fine-tuned PnP algorithm.
We also compare our results with the gradient-step PnP algorithm \cite{hurault2021gradient}, leveraging a nonconvex implicit prior and backtracking and that is representative of state-of-the-art convergent PnP algorithms.

\subsection{Stability of PnP and RED algorithms}
The instability of PnP algorithms often translates into unrealistic artifacts in the reconstructed image. Figure~\ref{fig:gaus_problem} illustrates this phenomenon and shows that the proposed approach allows us to circumvent this drawback.

\begin{figure}[t]
\footnotesize
\centering
\setlength{\tabcolsep}{1pt}%
\begin{tabular}{c c c c}
 Groundtruth & $A^\top y$ & Standard & Equivariant \\
 \includegraphics[width=0.115\textwidth]{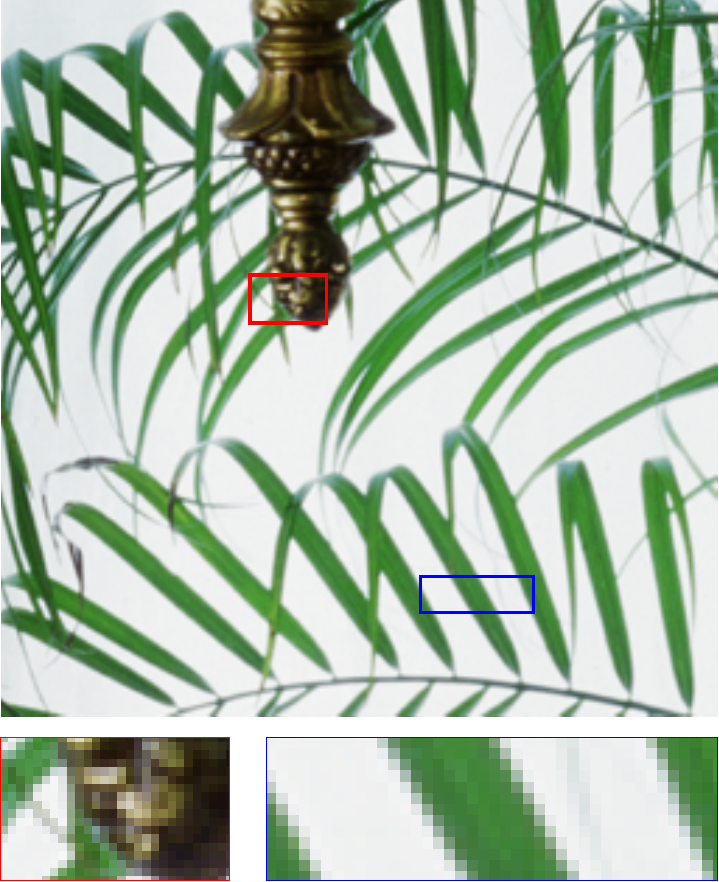} &
 \includegraphics[width=0.115\textwidth]{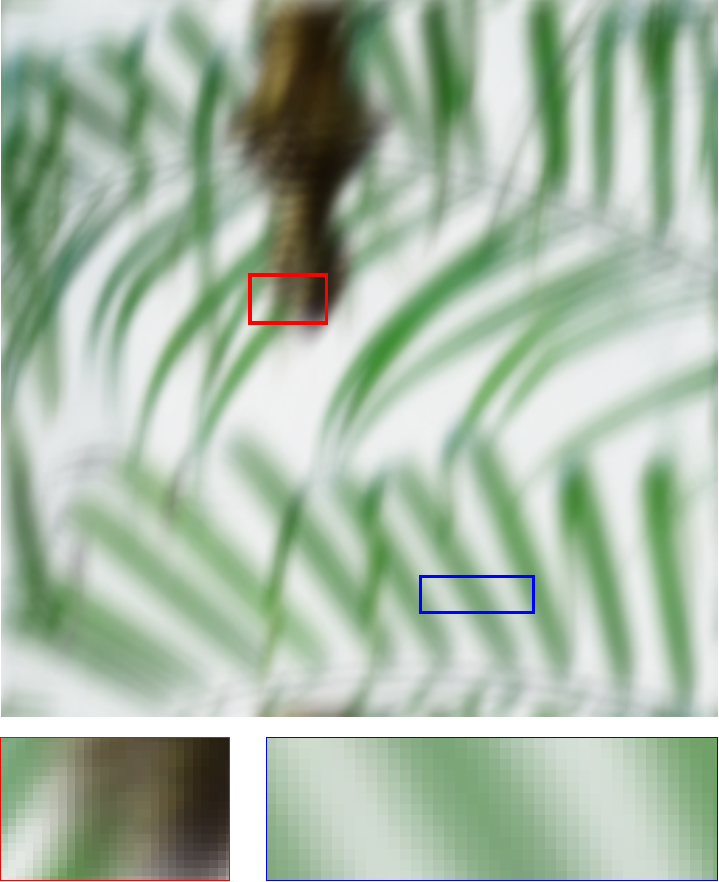} &
 \includegraphics[width=0.115\textwidth]{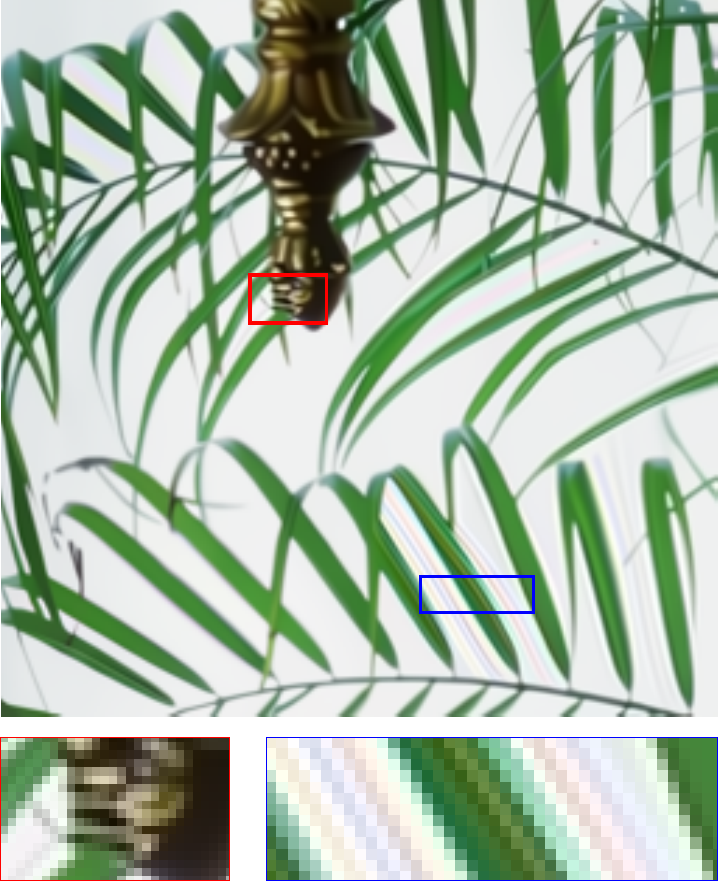} &
 \includegraphics[width=0.115\textwidth]{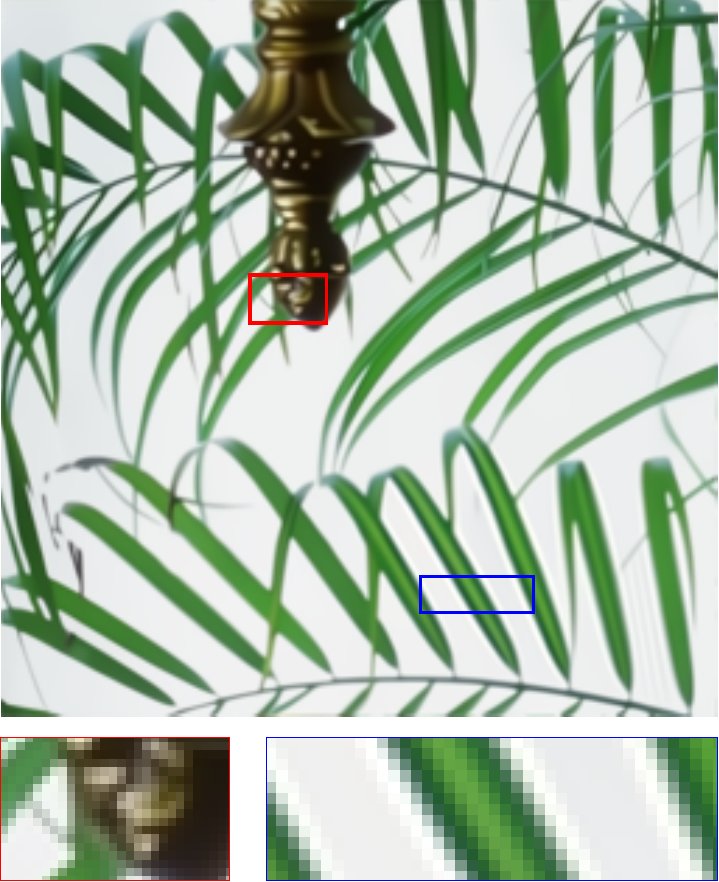}\\
  PSNR & 16.65 & 30.30 & 31.65
\end{tabular}
\vspace{-1.5em}
\caption{Motion deblurring on a Set3C sample with \eqref{eq:pnp_fb} relying on a DRUNet backbone denoiser.}
\vspace{-2em}
\label{fig:gaus_problem}
\end{figure}

\begin{figure*}[t]
\footnotesize
\begin{minipage}{0.49\textwidth}
    \centering
    \setlength{\tabcolsep}{1pt}%
    \begin{tabular}{c c}
    \raisebox{ %
 1.5\height %
 }{\rotatebox[origin=c]{90}{DnCNN}} 
 &
  \includegraphics[width=0.95\textwidth]{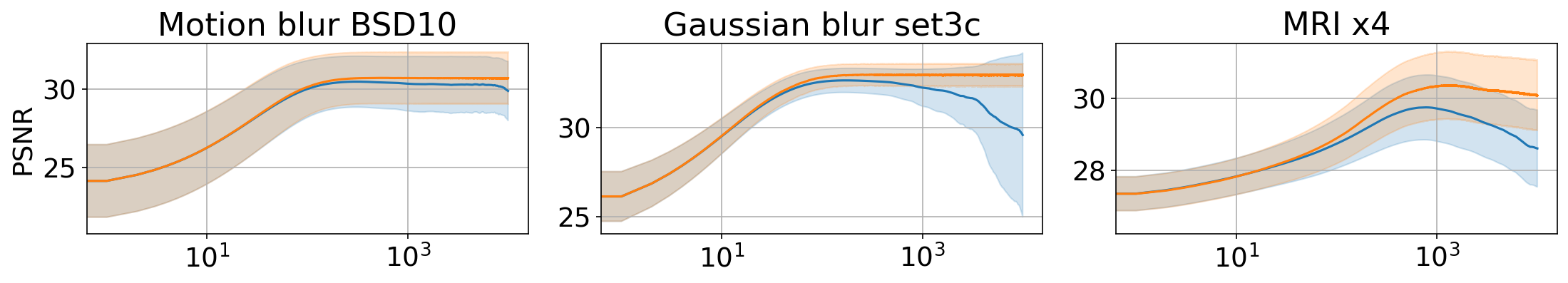}  \\
     \raisebox{ %
 1.5\height %
 }{\rotatebox[origin=c]{90}{DRUNet}} & \includegraphics[width=0.95\textwidth]{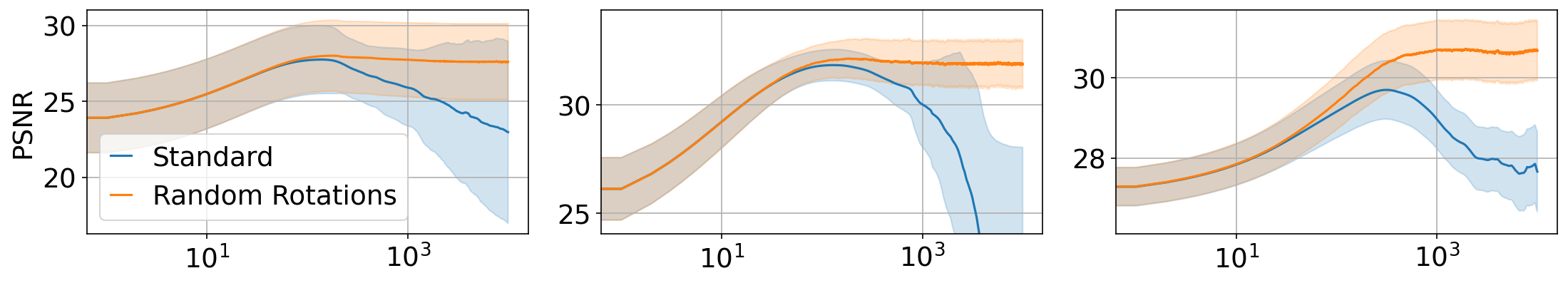}  \\
     \raisebox{ %
 1.\height %
 }{\rotatebox[origin=c]{90}{LipDnCNN}} & \includegraphics[width=0.95\textwidth]{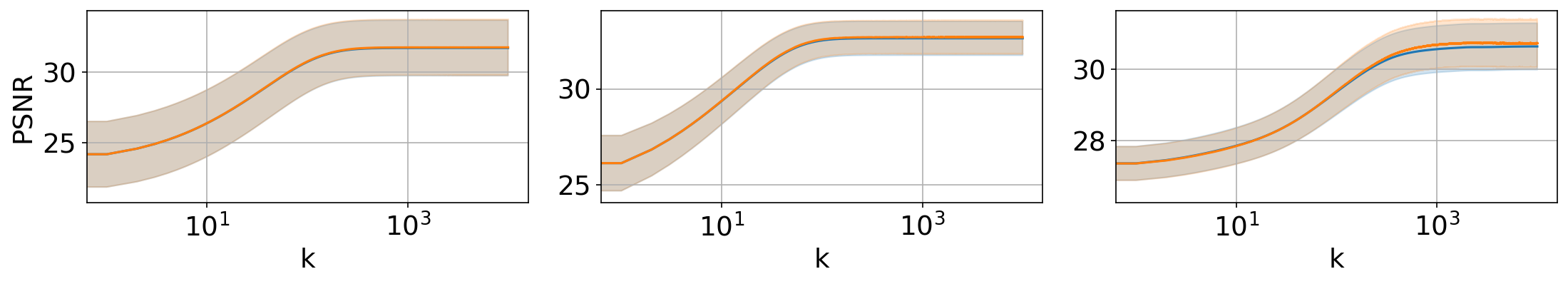}  \\
\end{tabular}
\end{minipage}%
\begin{minipage}{0.49\textwidth}
    \centering
    \setlength{\tabcolsep}{1pt}%
    \begin{tabular}{c c}
 &
  \includegraphics[width=0.95\textwidth]{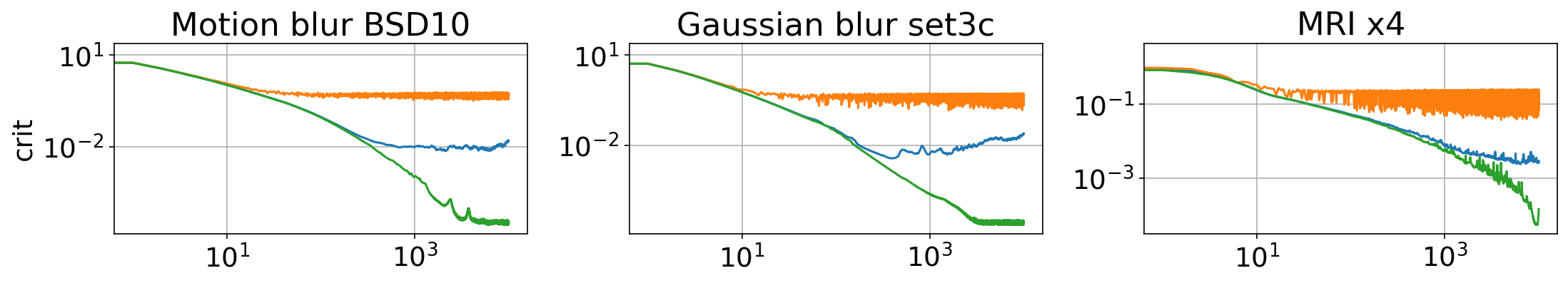}  \\
 & \includegraphics[width=0.95\textwidth]{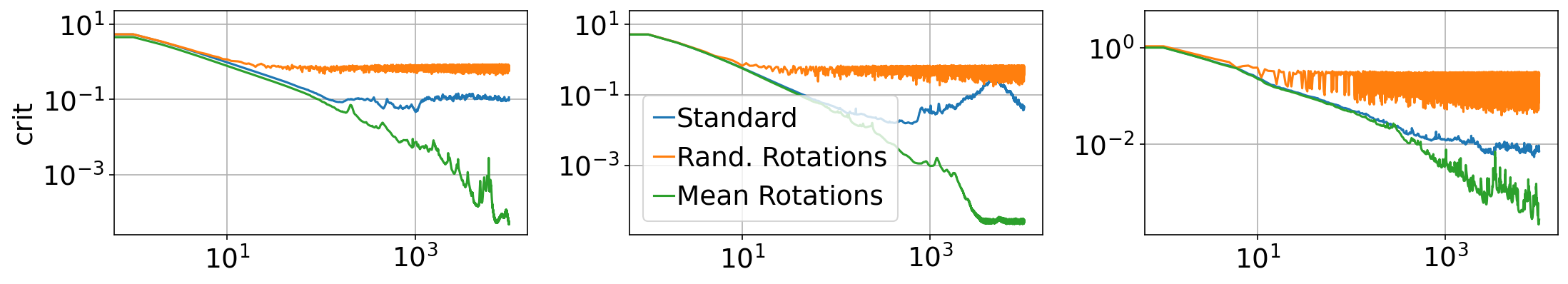}  \\
 & \includegraphics[width=0.95\textwidth]{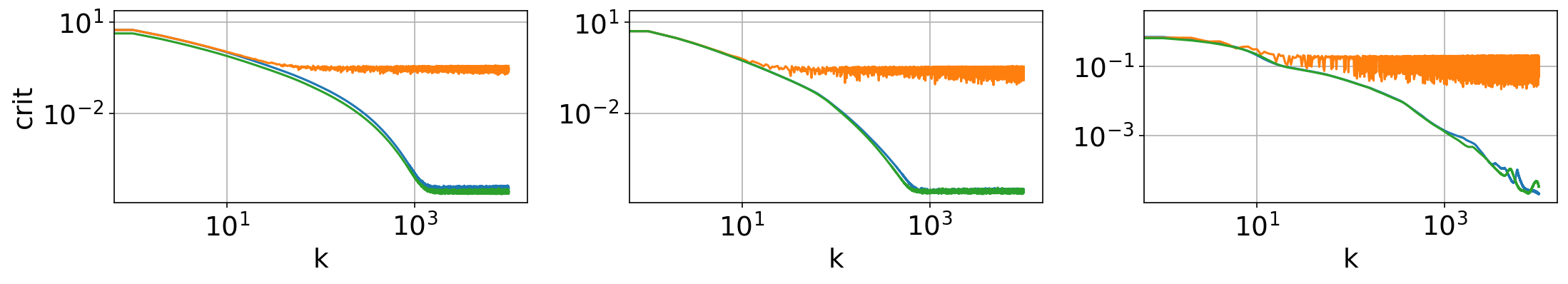}  \\
\end{tabular}
\end{minipage}
\vspace{-1.5em}
\caption{Average PSNR (left) and  convergence criterion $\|x_{k+1}-x_{k}\|/\|x_k\|$ (right) for 3 different imaging problems as a function of the iteration number with different backbone denoisers plugged in the \cref{eq:pnp_fb} algorithm. Top row: DnCNN, middle row: DRUNet, bottom row: 1-Lipschitz DnCNN.}
\vspace{-1.5em}
\label{fig:metrics_crit}
\end{figure*}

Figure~\ref{fig:metrics_crit} shows the PSNR and convergence criterion $\|x_{k+1}-x_{k}\|/\|x_k\|$ on different problems and backbone architecture. For each of these problems, the denoising level $\sigma$ in \eqref{eq:pnp_fb} is set to $\sigma = 0.01$ (resp. $\sigma = 0.015$) for DnCNN (resp. DRUNet) backbone architectures.
We notice that the \eqref{eq:pnp_fb} algorithm with equivariant denoiser shows a more stable PSNR along iterations than its non-equivariant counterpart. In particular, we observe in the right panel of Figure~\ref{fig:metrics_crit} that the convergence rate of the \eqref{eq:pnp_fb} algorithm with $\G$-equivariant DnCNN matches the behaviour of the algorithm with Lipschitz denoisers. Lastly, we notice that the proposed equivariant approach also benefits convergent \eqref{eq:pnp_fb} algorithms relying on Lipschitz backbone denoisers.
We further stress that reconstructions obtained with the equivariant Monte-Carlo estimates are consistent with those obtained with deterministic Reynolds averaging (see \cref{tab:MC_compare_rebuttal}).

In the case of the \eqref{eq:red_gd} algorithm, we observe similar behaviours. Reconstructions on a $\times$2 SR problem are shown in Figure~\ref{fig:red_sr_results} and associated convergence plots can be found in Figure~\ref{fig:red_convergence}. 
In the case of a non-equivariant DRUNet backbone, the reconstructed image shows important geometric artifacts that disappear in the equivariant case.

We however stress that the proposed approach may fail to stabilize the algorithm in certain settings. 
For example, in the case of a SCUNet backbone denoiser, strong artifacts are visible from 50 iterations only in the reconstruction in both the classical and equivariant version of the \eqref{eq:pnp_fb} algorithm as seen in Figure~\ref{fig:gaus_problem_bsd68_rebuttal}. Similarly, we did not observe convergence of the \cref{eq:red_gd} algorithm with the DiffUNet backbone, thus requiring early stopping of the algorithm in order to reach good reconstruction results.

Lastly, several studies have shown that the noise level $\sigma$ plays an important role in the stability of the PnP algorithm. Increasing $\sigma$ may help to solve the instability issue at the cost of over-smoothed reconstructions.

\begin{figure}[t]
\vspace{-0.5em}
\footnotesize
    \centering
    \setlength{\tabcolsep}{1pt}%
    \begin{tabular}{@{\hskip 0pt}c c c c c c c@{\hskip 0pt}}
 & \scalebox{0.8}{Groundtruth} & \scalebox{0.8}{$y$} & \scalebox{0.8}{TGV (28.40)} & \scalebox{0.8}{Wavelets (28.64)} &  \\
 & \includegraphics[width=0.09\textwidth]{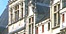} &
 \includegraphics[width=0.09\textwidth]{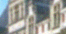} &
 \includegraphics[width=0.09\textwidth]{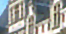} &
 \includegraphics[width=0.09\textwidth]{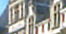} \\
& \scalebox{0.8}{SCUNet (23.54)} & \scalebox{0.8}{SwinIR (27.85)}  & \scalebox{0.8}{DRUNet (9.58)} & \scalebox{0.8}{GSNet (29.22)} & \scalebox{0.8}{DnCNN (29.66)}\\
 \raisebox{ %
 .6\height %
 }{\rotatebox[origin=c]{90}{\scalebox{0.85}{Standard}}}  &
 \includegraphics[width=0.09\textwidth]{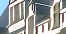} 
 &
 \includegraphics[width=0.09\textwidth]{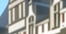} &
 \includegraphics[width=0.09\textwidth]{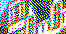} 
 & 
 \includegraphics[width=0.09\textwidth]{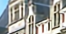} &
 \includegraphics[width=0.09\textwidth]{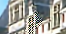}
 \\
 & \scalebox{0.8}{SCUNet (25.14)} & \scalebox{0.8}{SwinIR (27.83)} & \scalebox{0.8}{DRUNet (29.22)} & \scalebox{0.8}{GSNet (29.25)} & \scalebox{0.8}{DnCNN (30.36)}\\

 \raisebox{ %
 .8\height %
 }{\rotatebox[origin=c]{90}{\scalebox{0.85}{Equiv.}}}  &
 \includegraphics[width=0.09\textwidth]{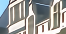} &
 \includegraphics[width=0.09\textwidth]{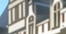}  
 &
 \includegraphics[width=0.09\textwidth]{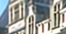} 
 & 
 \includegraphics[width=0.09\textwidth]{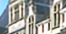} &
 \includegraphics[width=0.09\textwidth]{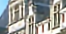} 
 
 \\
\end{tabular}
\vspace{-1em}
\caption{Gaussian deblurring with standard deviation $\sigma = 0.02$ on a BSD10 sample (detail) for different denoising backbone plugged in the (PnP) algorithm.}
\label{fig:gaus_problem_bsd68_rebuttal}
\vspace{-0.5em}
\end{figure}

 \begin{figure}[t]
 \vspace{-0.5em}
\footnotesize
    \centering
    \setlength{\tabcolsep}{1pt}%
    \begin{tabular}{c c c} 
    & DiffUNet & DRUNet \\
 \includegraphics[width=0.32\linewidth]{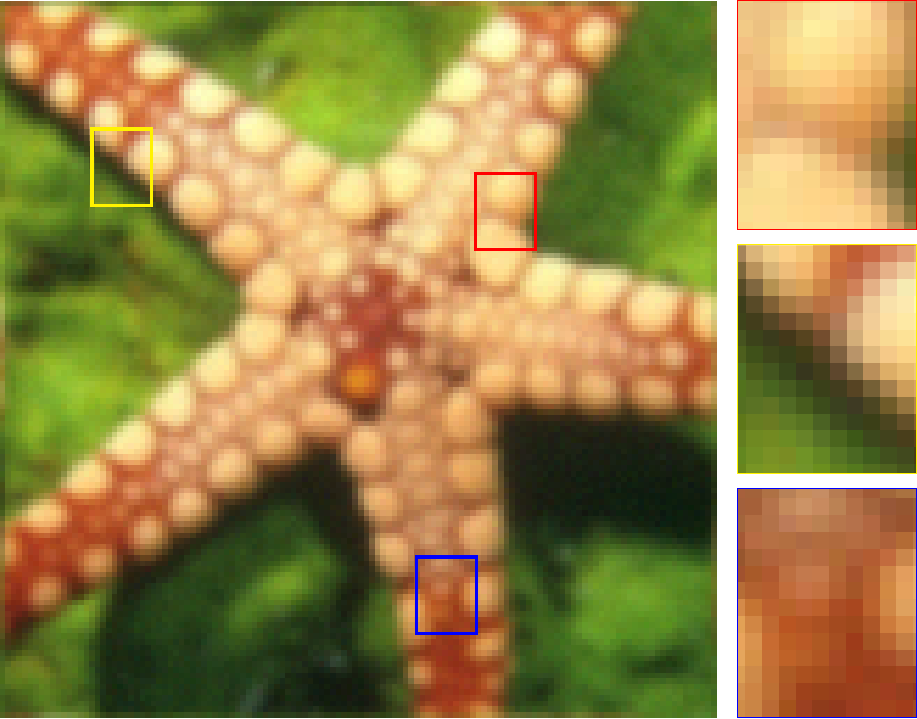} &
 \includegraphics[width=0.32\linewidth]{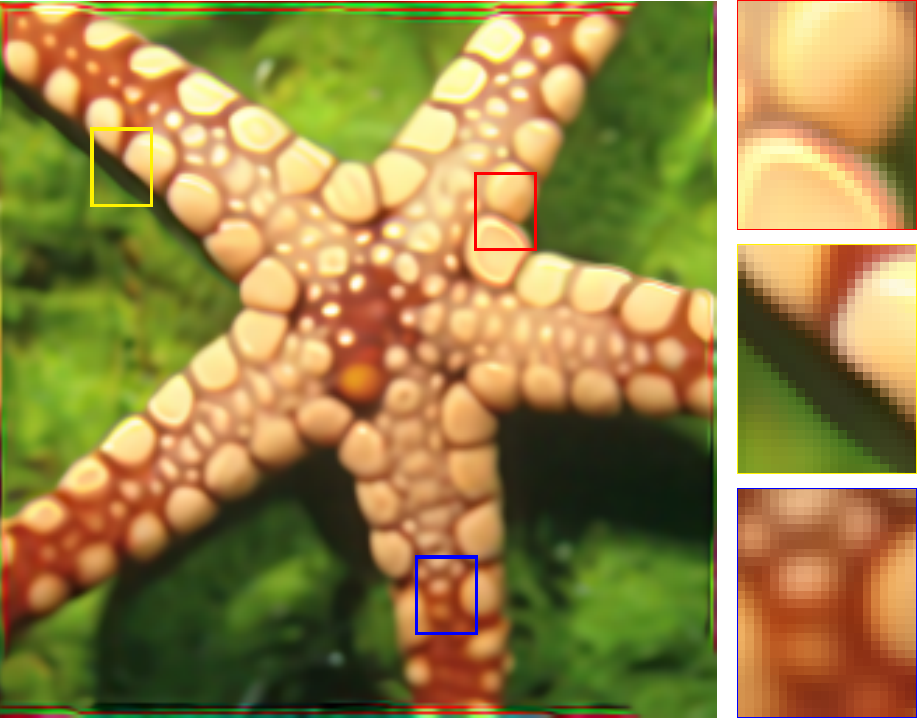} &
 \includegraphics[width=0.32\linewidth]{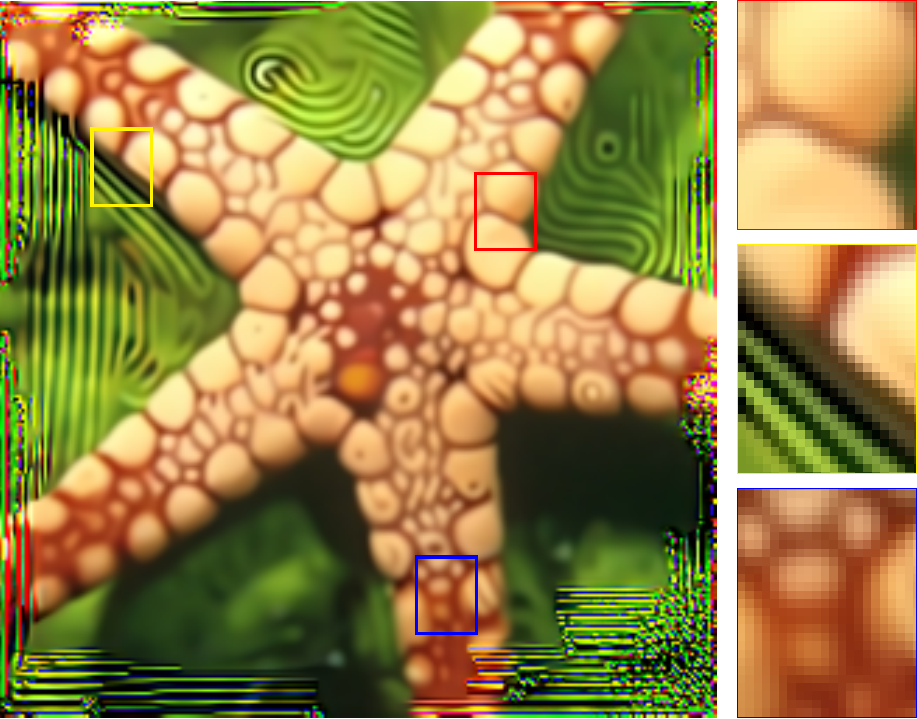} \\
 Observed & Standard (21.55) & Standard (20.75)  \\
 \includegraphics[width=0.32\linewidth]{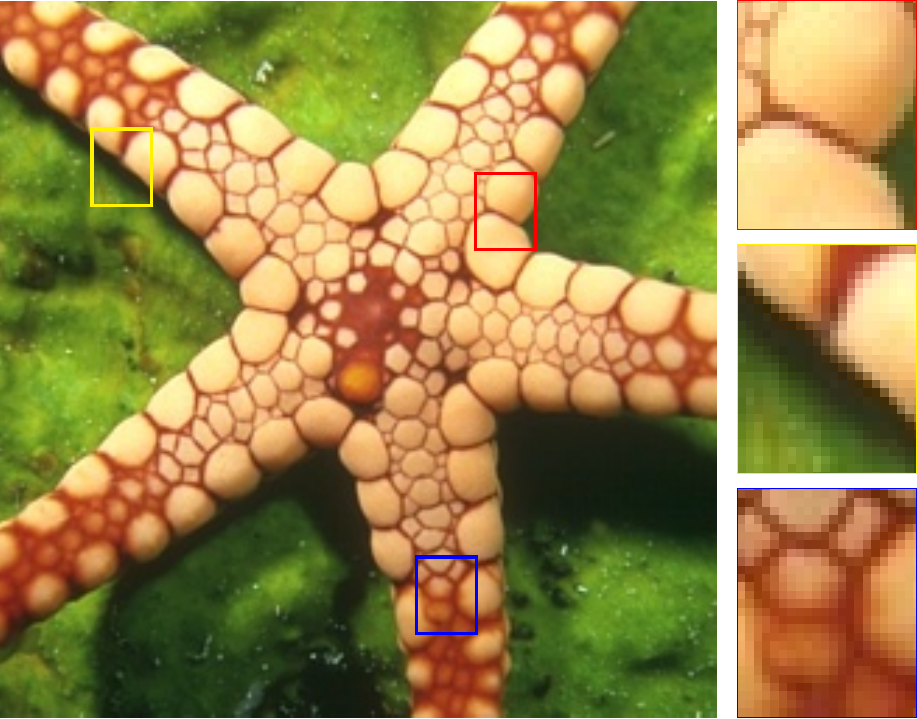} &
 \includegraphics[width=0.32\linewidth]{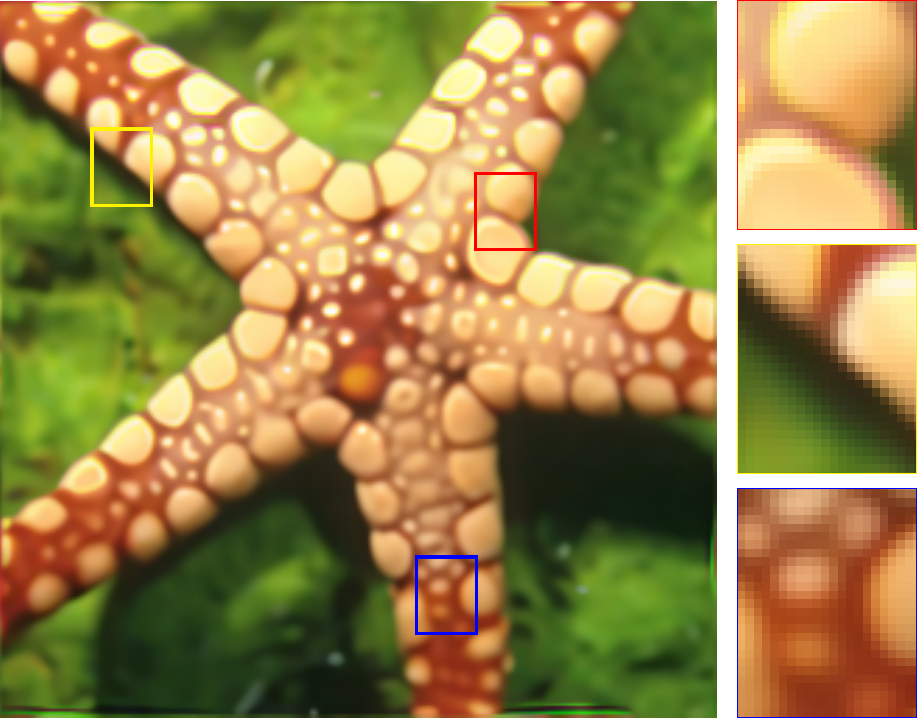} &
 \includegraphics[width=0.32\linewidth]{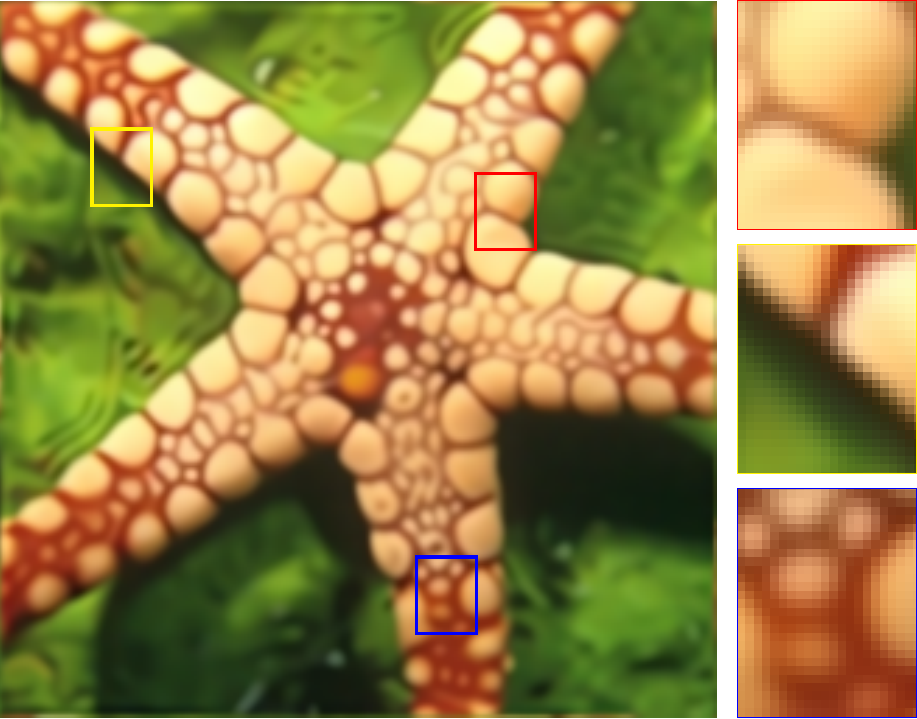} \\
 Groundtruth (PSNR) & Equivariant (22.26) & Equivariant (24.94)  \\
\end{tabular}
\vspace{-1em}
\caption{Results of the \cref{eq:red_gd} algorithm on a $\times 2$ SR problem for different backbone denoisers. Middle column: DiffUNet; right column: DRUNet. Top row: standard algorithm; Bottom row: equivariant algorithm.}
\label{fig:red_sr_results}
\vspace{-2em}
\end{figure}

 \begin{figure}[t]
 \vspace{-1em}
\footnotesize
    \centering
    \setlength{\tabcolsep}{1pt}%
    \begin{tabular}{c c} 
     DiffUNet & DRUNet \\
    \includegraphics[width=0.5\linewidth]{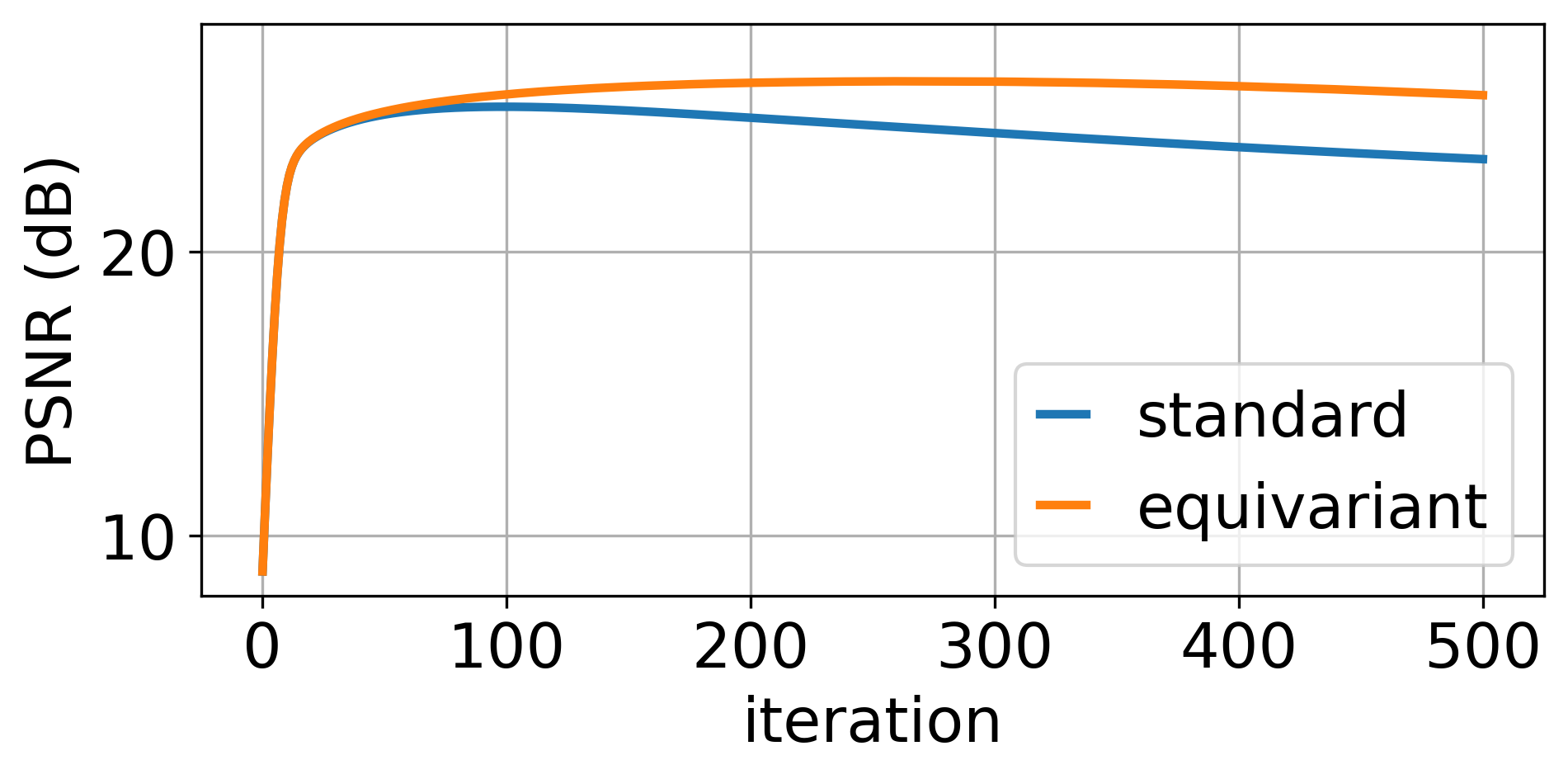} &
 \includegraphics[width=0.5\linewidth]{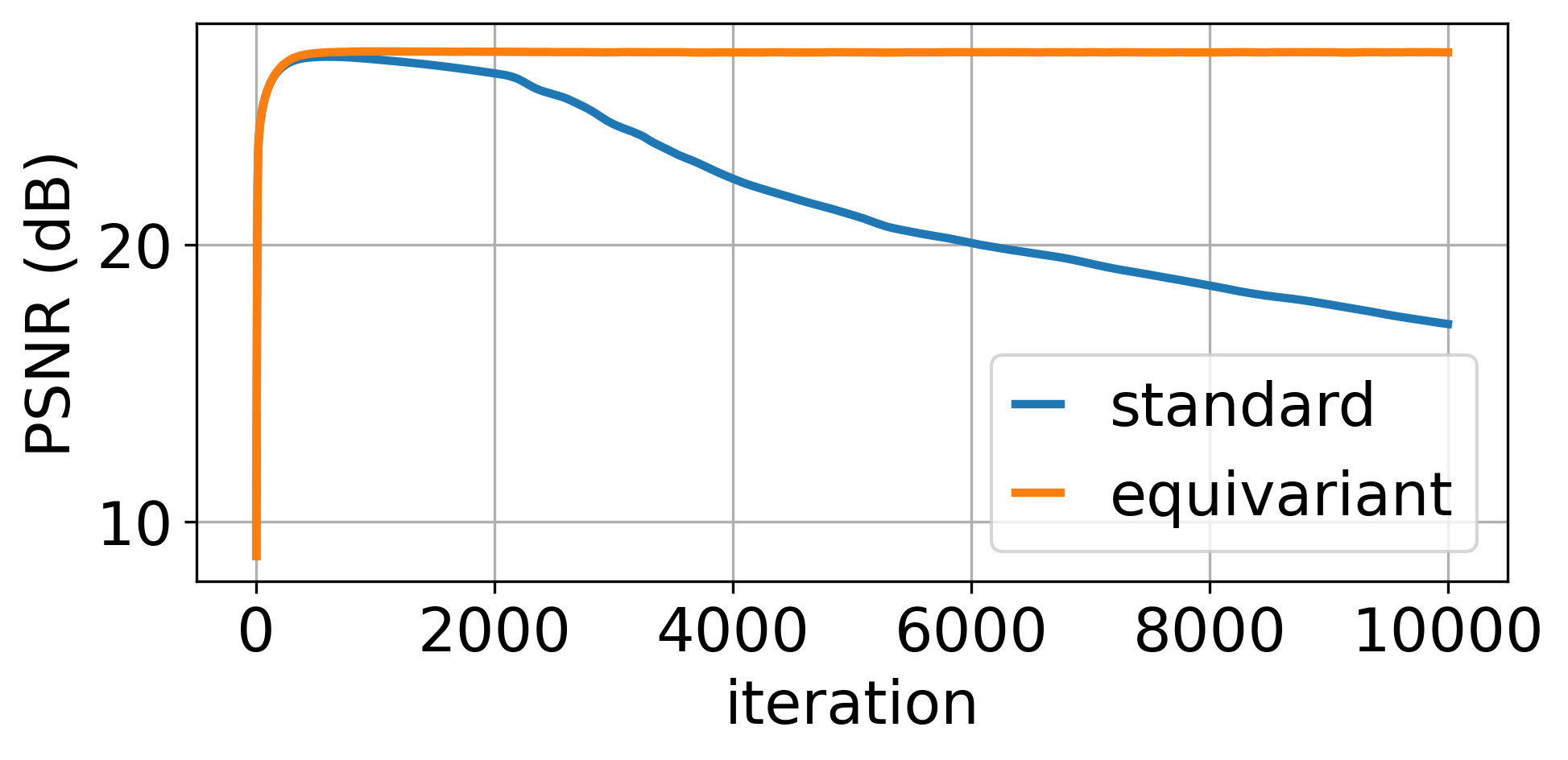}\\
\end{tabular}
\vspace{-1em}
\caption{Evolution of the PSNR along the \cref{eq:red_gd} algorithm for different backbone denoisers associated to the reconstructions shown in Figure~\ref{fig:red_sr_results}.}
\label{fig:red_convergence}
\vspace{-1em}
\end{figure}

\subsection{Interplay with the kernel of $A$}

In the previous section, we have seen that equivariant denoisers can prevent the emergence of artifacts along theiterations of \eqref{eq:pnp_fb}. Despite their unnatural aspect, these artifacts are not incompatible with good data-fidelity measures. In fact, PnP algorithms offer no control over $\ker(A)$, which is nontrivial by nature of the ill-posed inverse problem \eqref{eq:inv_pb}; 
artifacts appearing during the reconstruction are therefore likely to belong to $\ker(A)$.

This phenomenon can be illustrated in the case of MRI, where $\ker(A)$ corresponds to the non-sampled subspace of the Fourier domain.
In this case, $A$ is not rotation equivariant and Proposition \ref{prop:composition} suggests that enforcing equivariance of the prior improves the stability.
This is illustrated in Figure~\ref{fig:mri_128_fact4} showing the reconstruction with and without the proposed equivariant algorithmic update.
In the standard (non-equivariant) setting, mild artifacts appear between iteration $i = 10^3$ and $j = 10^4$. These may appear unnoticed in $x_j$; however, they clearly appear when plotting the difference $x_j-x_i$. Interestingly, the Fourier spectrum of these artifacts shows significantly more energy at frequencies that were not sampled.
The same experiment with the equivariant algorithm shows no such artifacts and a more uniform Fourier spectrum. 

\begin{figure}
[t]
\footnotesize
    \centering
    \setlength{\tabcolsep}{1pt}%
    \begin{tabular}{c c c c }
 & \includegraphics[width=0.14\textwidth]{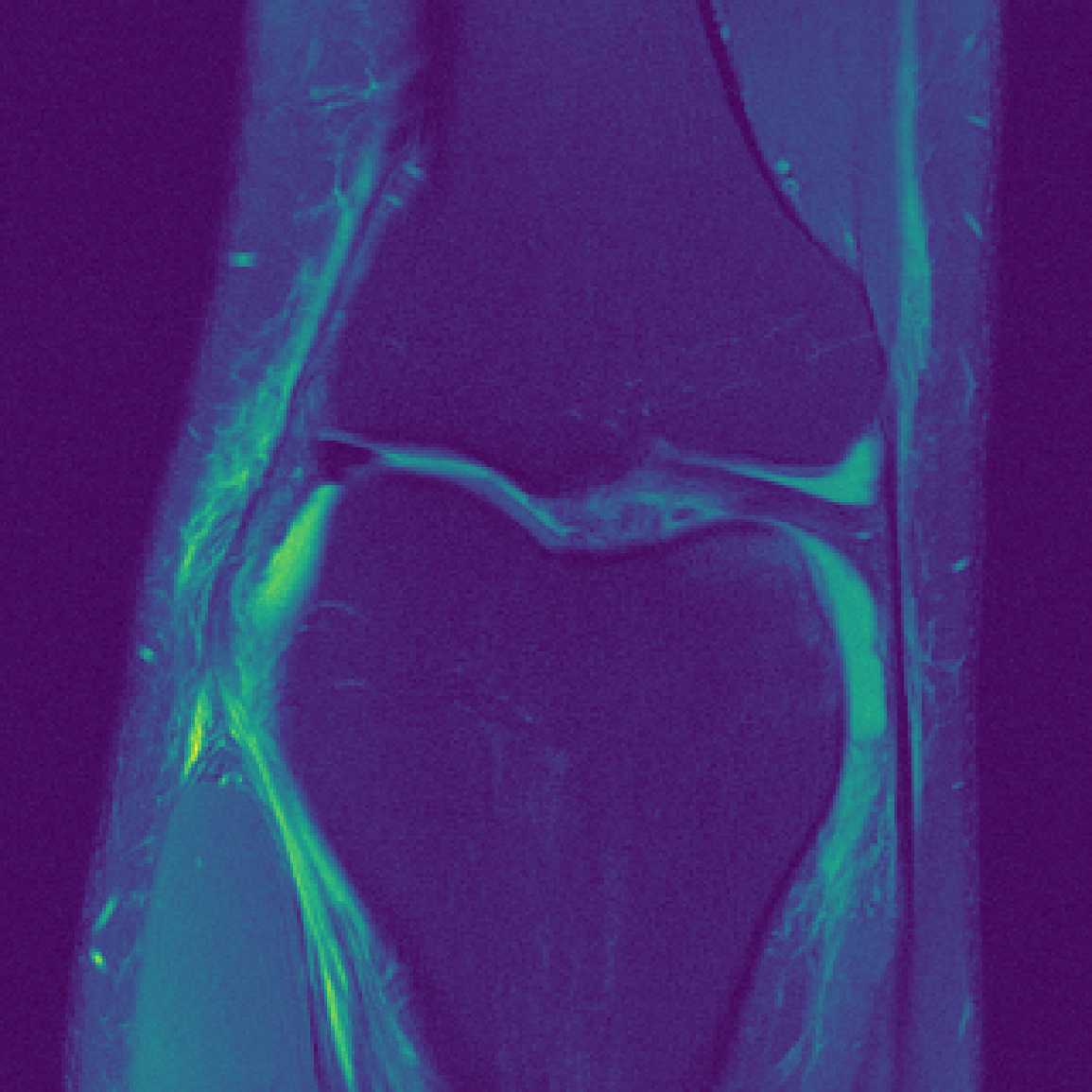} &
 \includegraphics[width=0.14\textwidth]{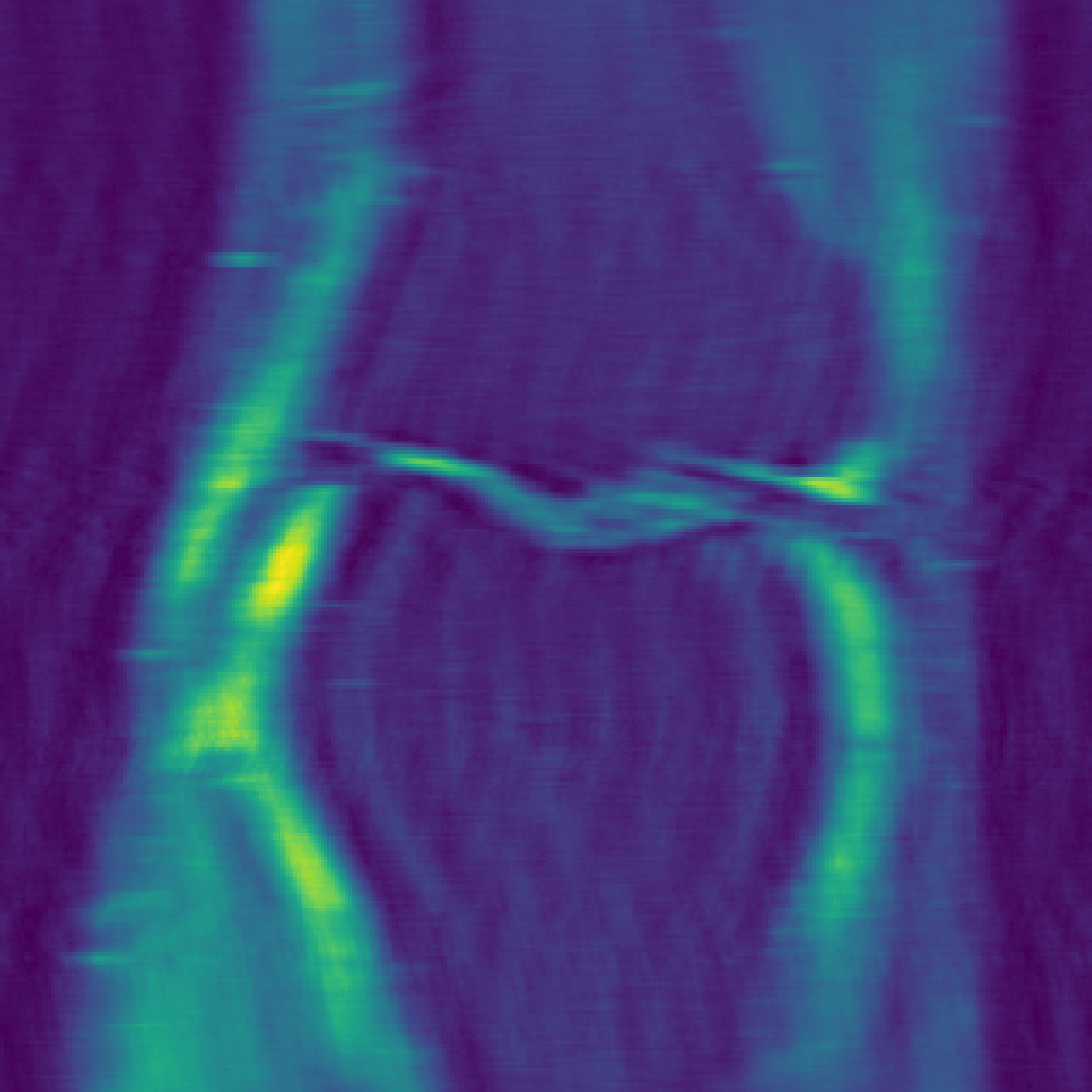} &
 \includegraphics[width=0.14\textwidth]{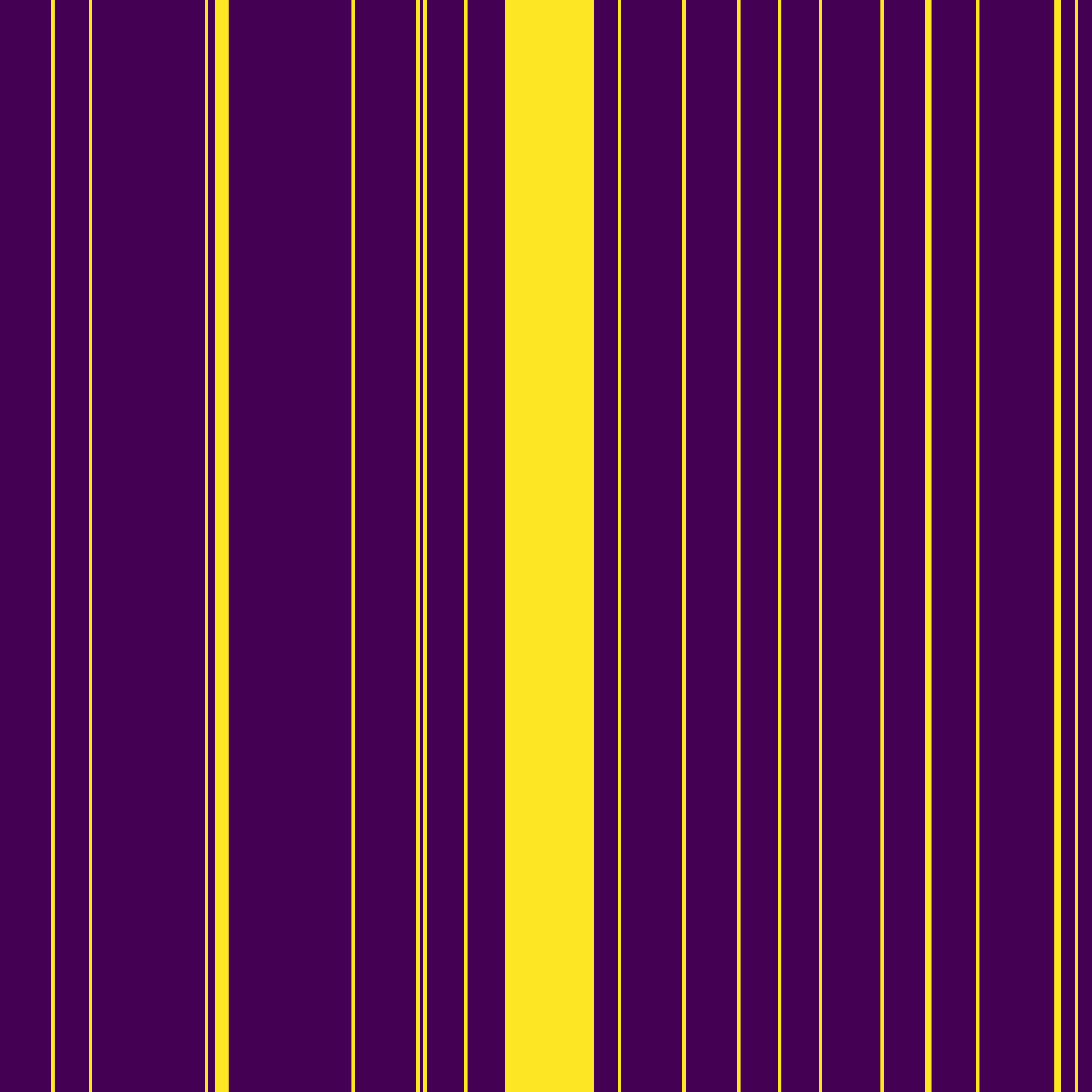} \\
 & Groundtruth & $A^\top y$ & Mask \\
\raisebox{ %
 1.4\height %
 }{\rotatebox[origin=c]{90}{Standard}} & \includegraphics[width=0.14\textwidth]{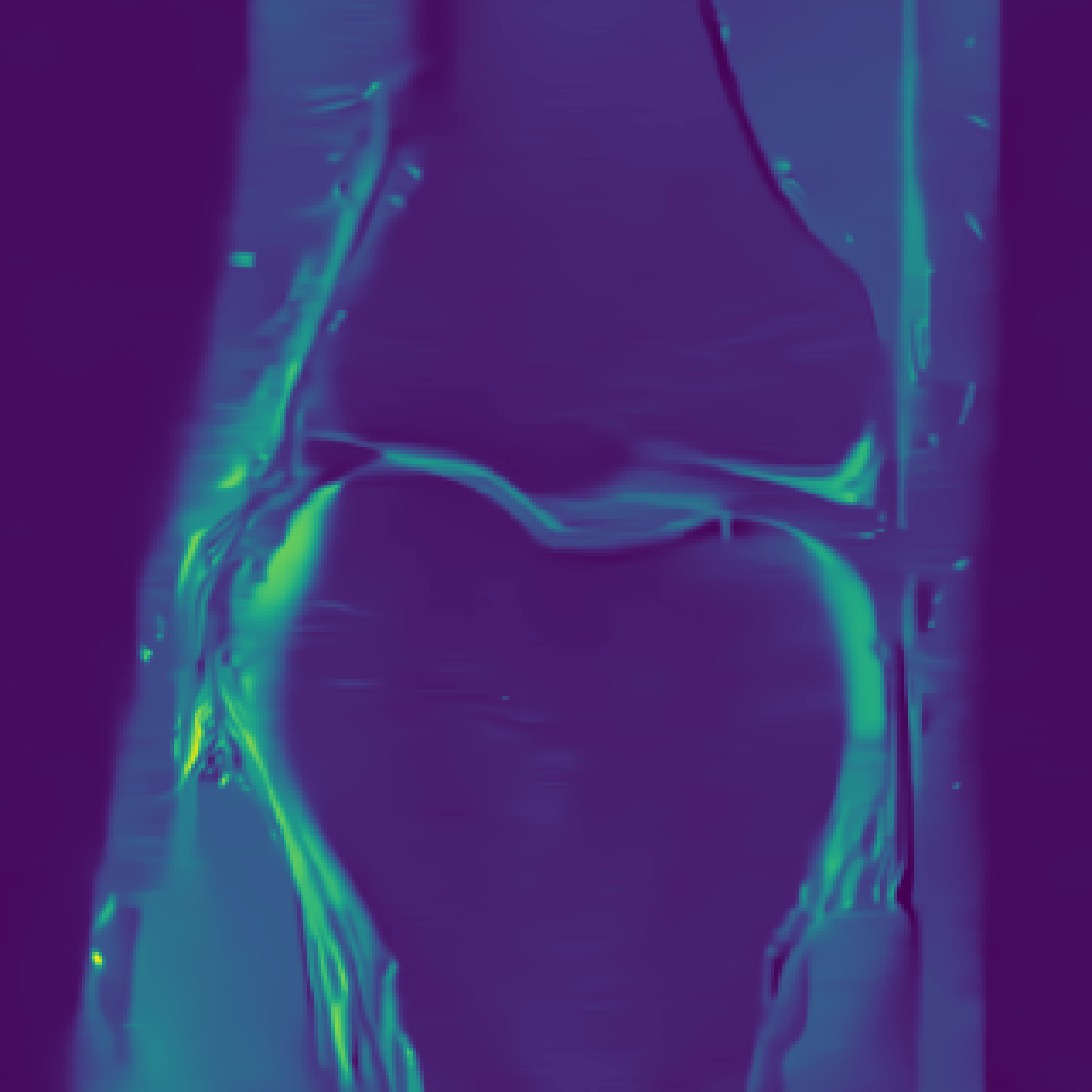} &
 \includegraphics[width=0.14\textwidth]{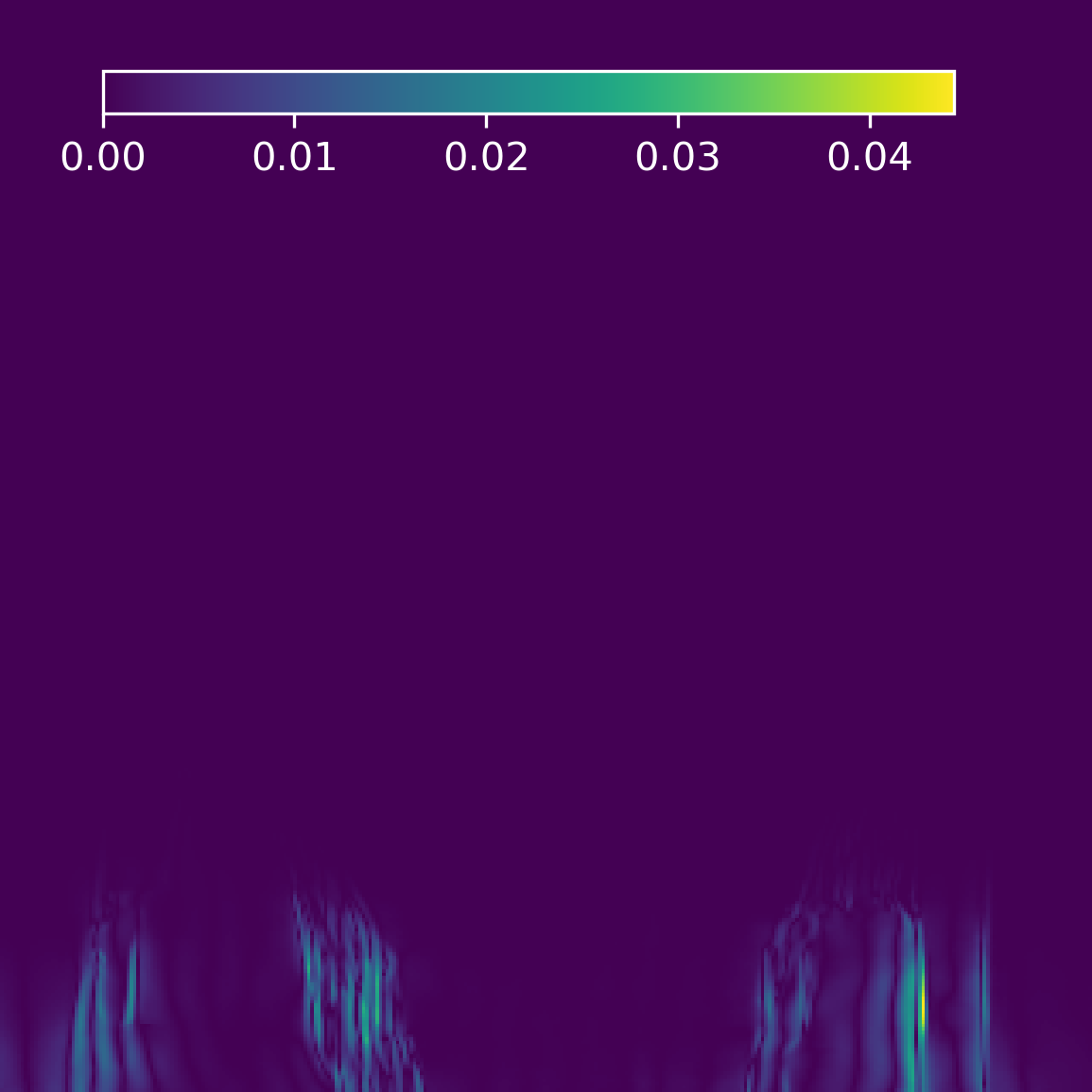} &
 \includegraphics[width=0.14\textwidth]{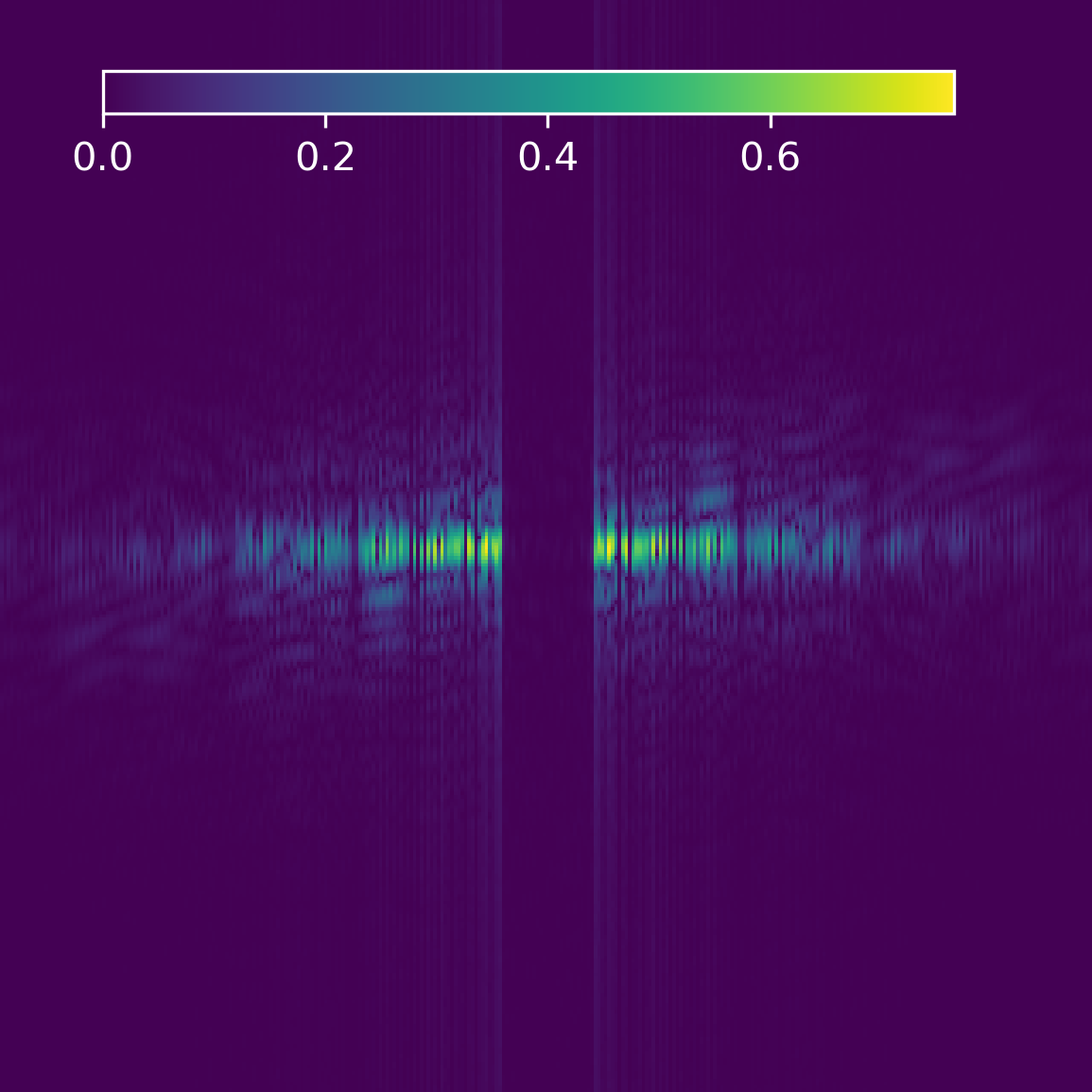} \\
 & $x_{j}$, PSNR=34.10 & $|x_{j}-x_{i}|$ & $\operatorname{FFT}(x_{j}-x_{i})$ \\
\raisebox{ %
 1.4\height %
 }{\rotatebox[origin=c]{90}{Equivariant}} & \includegraphics[width=0.14\textwidth]{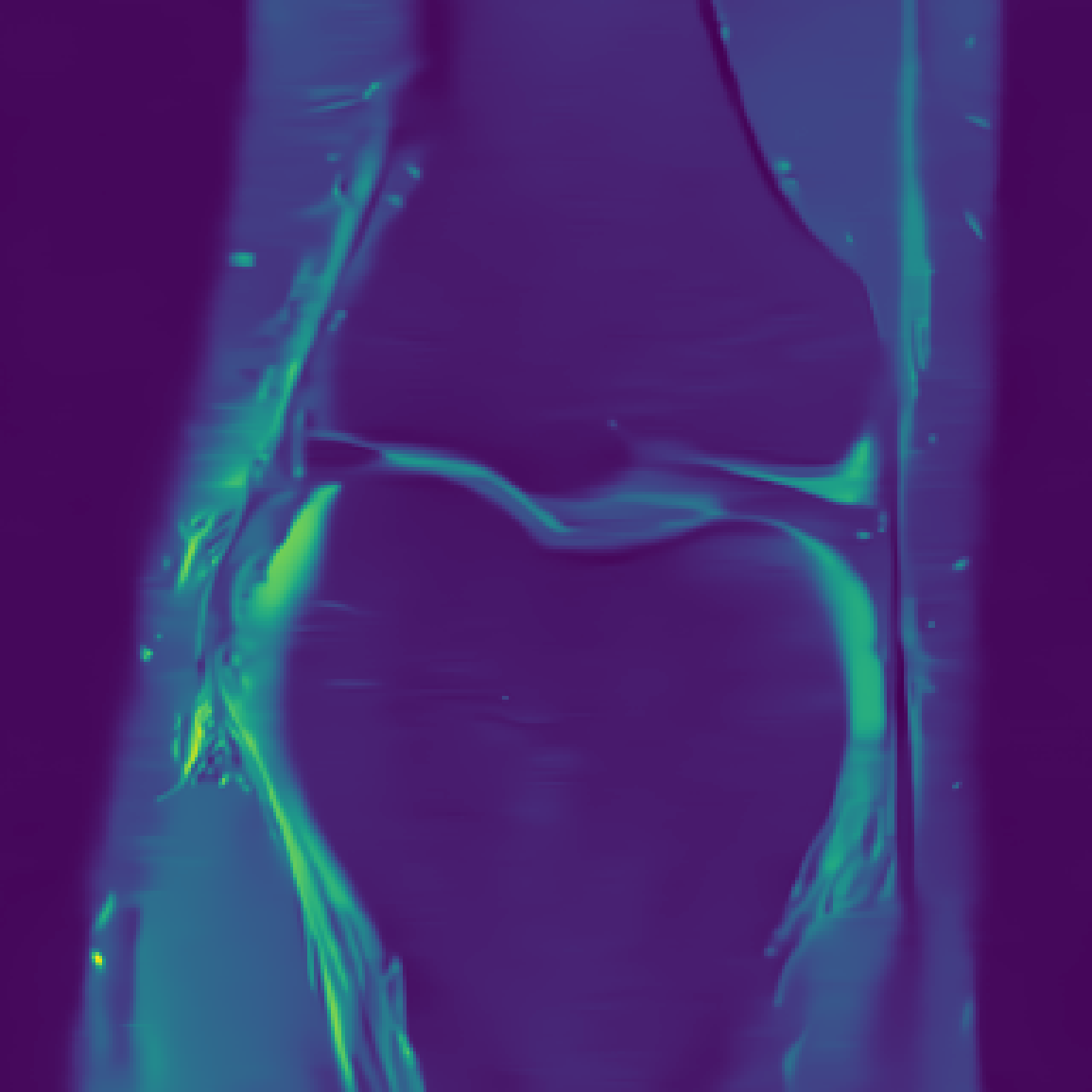} &
 \includegraphics[width=0.14\textwidth]{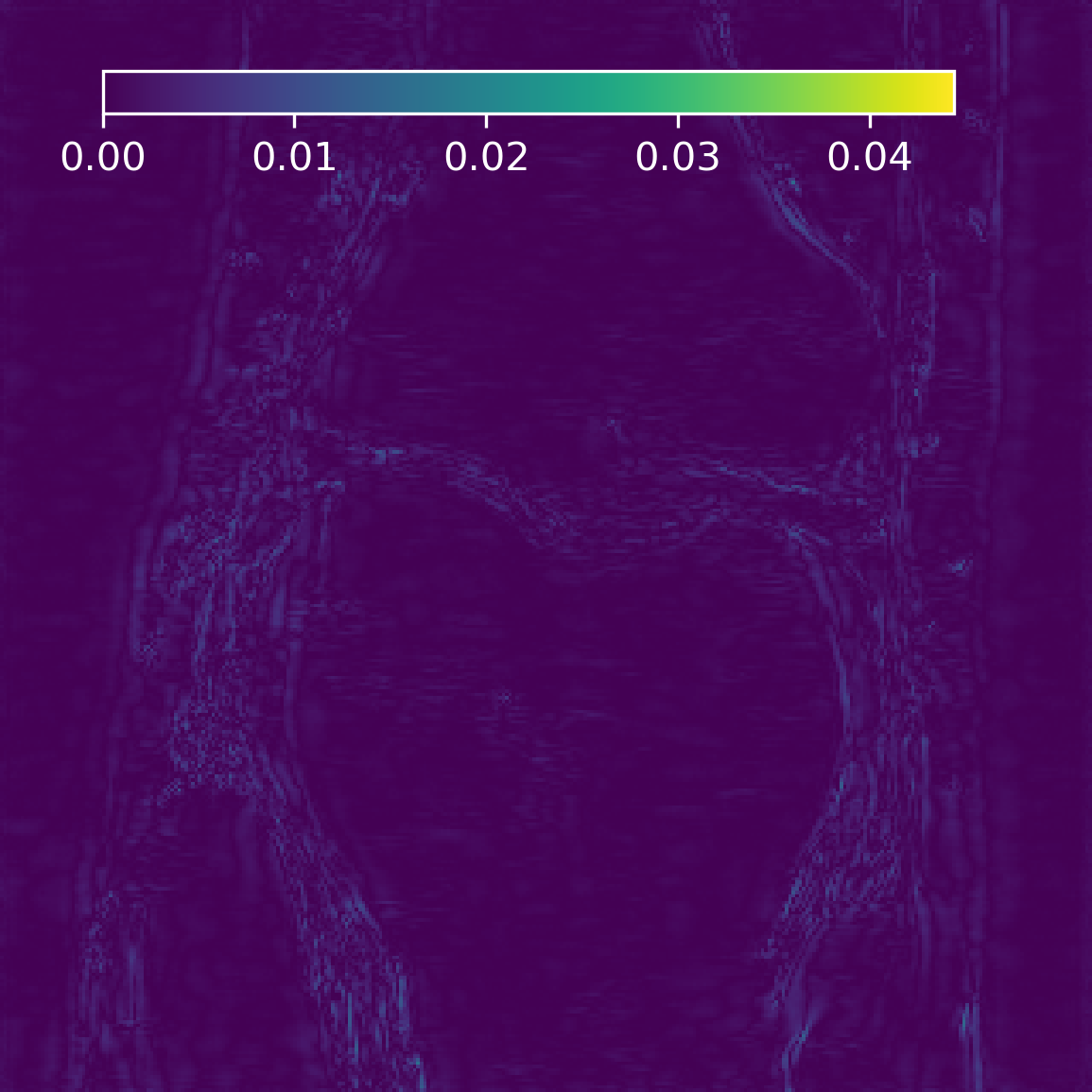} &
 \includegraphics[width=0.14\textwidth]{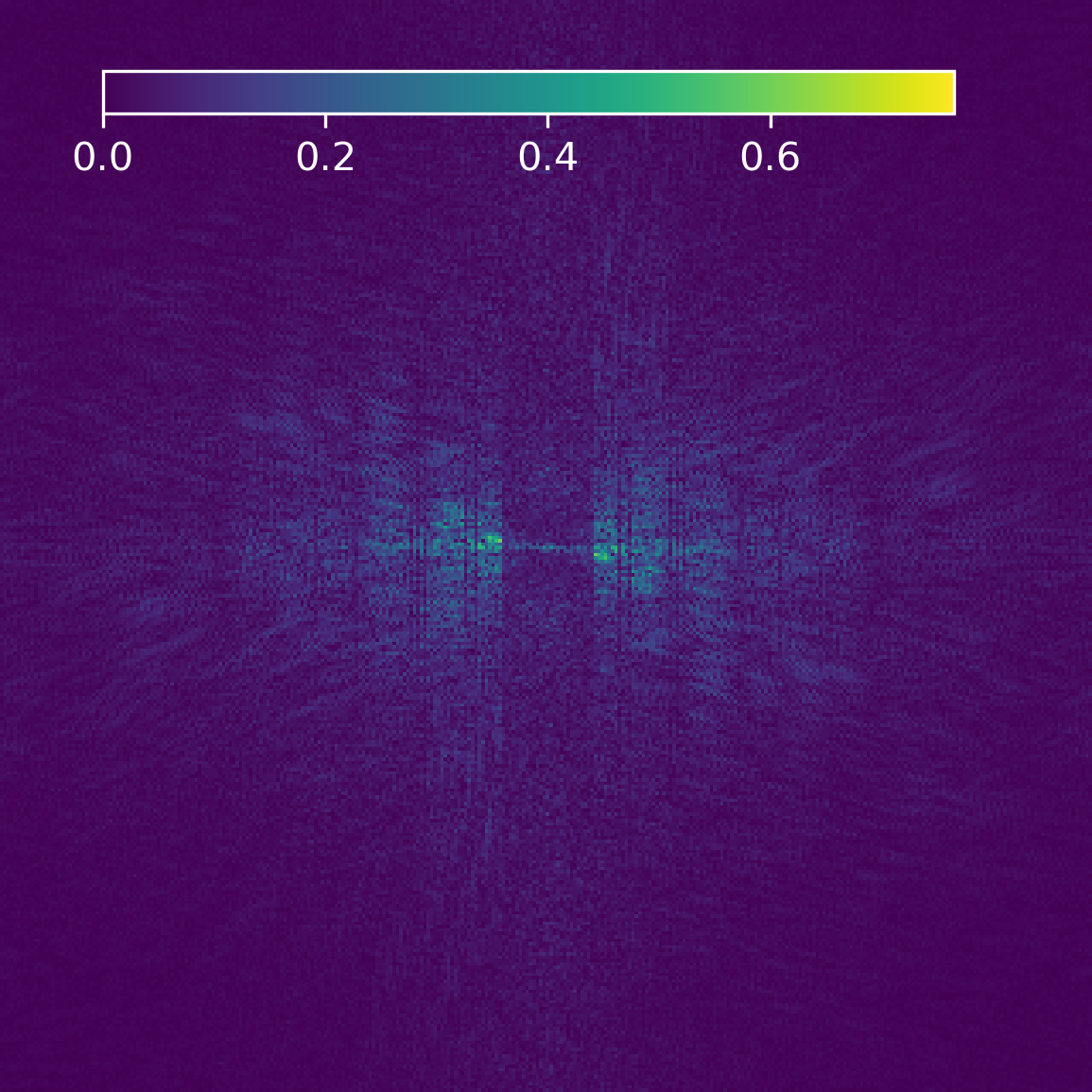} \\
 & $x_{j}$, PSNR=35.26 & $|x_{j}-x_{i}|$ & $\operatorname{FFT}(x_{j}-x_{i})$ \\
\end{tabular}
\vspace{-1em}
\caption{Evolution of the reconstruction along the iterates of \cref{eq:pnp_fb} with a (non 1-Lipschitz) DnCNN backbone. The top row shows the ground truth, back-projected data, and Fourier mask. The second (resp. third) row shows, from left to right: reconstruction at iteration $j=10^4$; difference $|x_j-x_i|$ between reconstructions at iteration $j=10^4$ and at iteration $i=10^3$; the difference in the Fourier domain (displayed in logarithmic scale).}
\label{fig:mri_128_fact4}
\vspace{-1em}
\end{figure}

\subsection{Sampling from RED prior}

The gain in stability provided by the proposed approach opens the door to more robust sampling algorithms relying on implicit denoising priors, such as \cref{eq:ula}, where a sufficiently large number of iterations is required in order to obtain good estimators. We show in Figure~\ref{fig:sampling_motion_ula} the estimated mean and variance obtained with the \Cref{eq:ula} algorithm, with both equivariant and non-equivariant DRUNet backbone denoisers.
In the non-equivariant setting, we observe similar artifacts to those obtained in the deterministic case on both the estimated mean and variance; these artifacts vanish in the equivariant case.

 \begin{figure}
[t]
\footnotesize
    \centering
    \setlength{\tabcolsep}{1pt}%
    \begin{tabular}{c c c} 
    & Standard & Equivariant \\
 \includegraphics[width=0.32\linewidth]{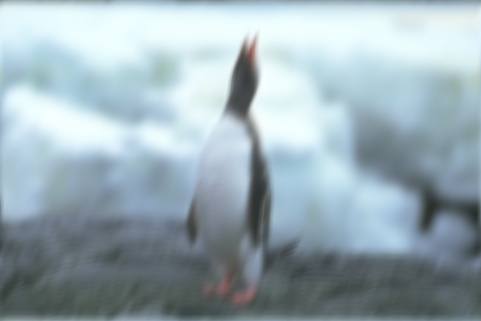} &
 \includegraphics[width=0.32\linewidth]{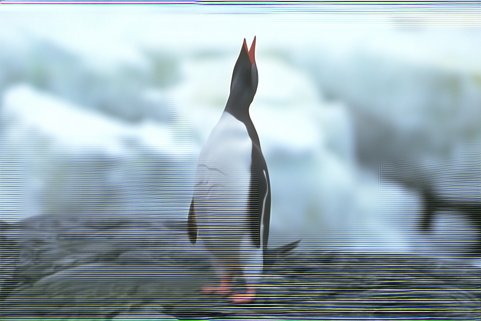} &
 \includegraphics[width=0.32\linewidth]{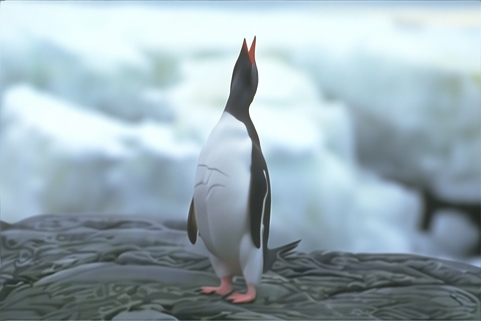} \\
 Observed & Estimated mean & Estimated mean  \\
 \includegraphics[width=0.32\linewidth]{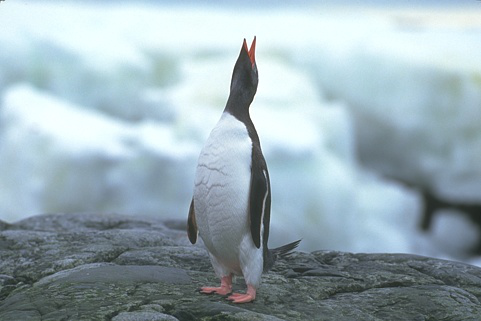} &
 \includegraphics[width=0.32\linewidth]{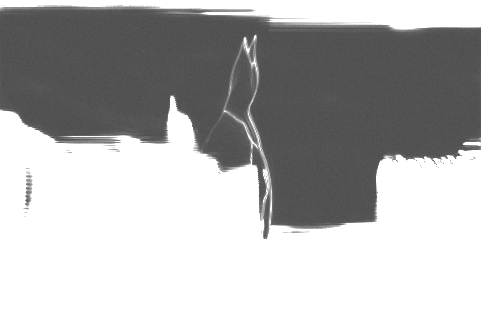} &
 \includegraphics[width=0.32\linewidth]{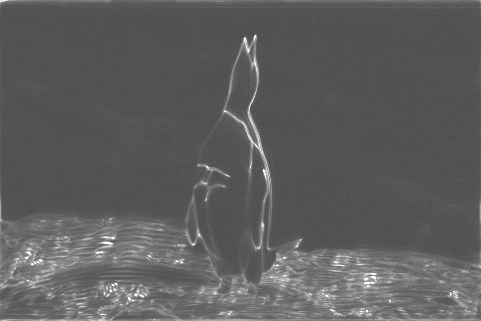} 
 \\
Ground truth & Estimated variance & Estimated variance  \\
\end{tabular}
\vspace{-1em}
\caption{Sampling posterior mean and variance of a motion deblurring problem on a BSD10 sample for a DRUNet backbone plugged in the \cref{eq:ula} algorithm.}
\label{fig:sampling_motion_ula}
\vspace{-2em}
\end{figure}

The leftmost plot of Figure~\ref{fig:ula_sampling_metrics} further illustrates this phenomenon. We notice that after a few hundred iterations, the reconstruction quality with the non-equivariant \Cref{eq:ula} algorithm collapses, yielding an irrelevant MCMC chain. 
While enforcing equivariance improves the situation, increasing the noise level in the denoiser in \Cref{eq:pnp_fb} can significantly improve the situation, see the rightmost plot of Figure~\ref{fig:ula_sampling_metrics}.

\begin{figure}
\footnotesize
    \centering
    \includegraphics[width=0.95\linewidth]{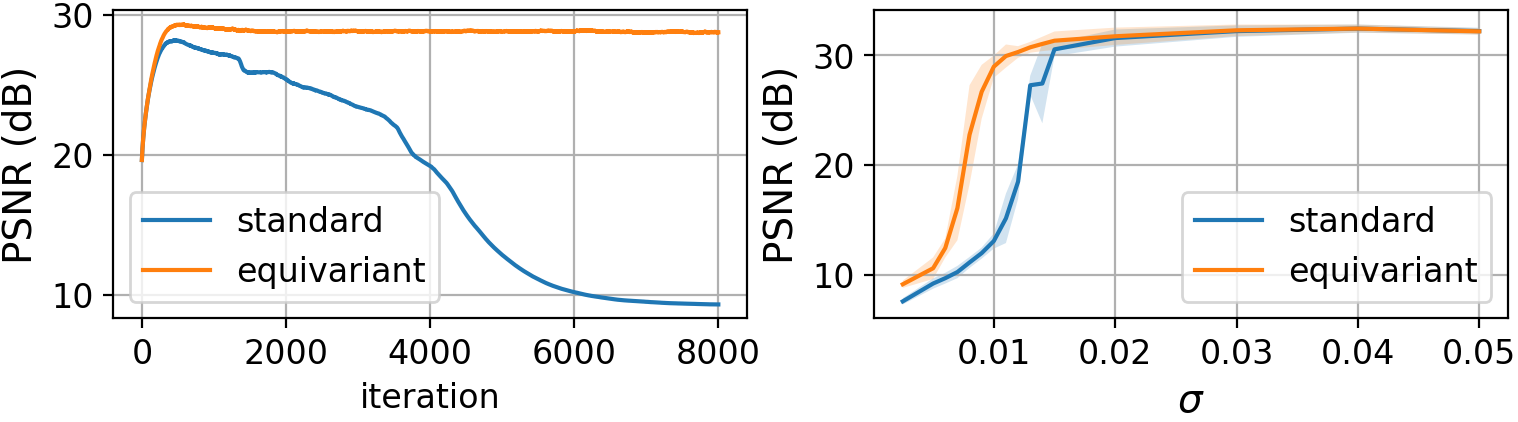}
    \vspace{-1em}
    \caption{Sampling with the \Cref{eq:ula} algorithm on a motion deblurring and the DRUNet prior, in the standard and equivariant cases. Left: PSNR of a sample from the MCMC chain for a sample of the Set3C dataset. Right: average PSNR over the Set3C dataset as a function of $\sigma$ in the denoising prior.}
    \label{fig:ula_sampling_metrics}
\vspace{-2em}
\end{figure}

\section{Limitations}

While our experiments show that the proposed method improves both the stability of the algorithmic scheme and the reconstructed image quality over its non-equivariant counterpart, we stress that it remains prone to divergence and/or hallucinating artifacts. 
This can for example be seen in Figure~\ref{fig:metrics_crit} in the case of the DRUNet backbone, which shows decreasing mean PSNR after a certain number of iterations, or in Figure~\ref{fig:gaus_problem_bsd68_rebuttal} for the SCUNet backbone, which shows important reconstruction artifacts.
In particular, the proposed equivariant approach does not clearly improve over its non-equivariant counterpart when using a SCUNet backbone.

\section{Conclusion}

In this work, we proposed a simple yet efficient method for ensuring approximate equivariance of implicit denoising priors. 
In essence, the method amounts to sampling and applying a group action at random at each step of the algorithm.
In spite of its simplicity, this method shows interesting theoretical properties. In the case of linear denoisers, for example, it allows us to enforce the symmetry of the Jacobian, which is a cornerstone property when aiming to derive explicit priors from implicit denoising priors. 
Furthermore, it can be shown that the Lipschitz constant of the equivariant linear denoiser can only be lower than that of its non-equivariant counterpart, thus improving the stability of the resulting PnP algorithm. 
We showcase the symmetrization effect of equivariance on the Jacobian of several architectures, and illustrate its stabilization effect for three families of imaging algorithms, namely PnP, RED, and ULA. Importantly, this stabilization procedure is not detrimental to the reconstruction quality, as often observed in the literature. 
However, despite the significant improvements brought by equivariance in terms of stability and image reconstruction quality, the proposed method remains prone to divergence and artifact contamination in reconstructions. 

\subsection*{Acknowledgements}
This work was supported by the BrAIN grant (ANR-20-CHIA-0016) and was granted access to the HPC resources of IDRIS under the allocation 2023-AD011014344 made by GENCI. J.~T.~is supported by the ANR's JCJC grant UNLIP.

{
\small
\bibliographystyle{ieeenat_fullname}
\bibliography{main}
}

\appendix

\section{Details on the non-linear example}

\paragraph{Derivation of $D_{\G}$}

We have
\begin{equation}
\begin{aligned} 
D_\G(x) &= \frac{1}{|\G|} \sum T_g^{-1} B_2 \operatorname{prox}_{\gamma \lambda \|\cdot\|_1}(B_1T_g x)\\
        &= \frac{1}{|\G|} \sum T_g^{-1} (B_1 + P) T_g  \operatorname{prox}_{\gamma \lambda \|\cdot\|_1}( B_1x)\\
        &= \left(B_1 +  \frac{1}{|\G|}\sum T_g^{-1} PT_g\right)  \operatorname{prox}_{\gamma \lambda \|\cdot\|_1}( B_1x)
\end{aligned}
\end{equation}
yielding the desired result.
The second step uses the fact that $B_2 = B_1 + P$ and that $B_1$ and $\operatorname{prox}$ are $\G$-equivariant functions. The third step just uses that $B_1$ is a $\G$-equivariant function.

\paragraph{Numerical details for Figure~\ref{fig:toy_exp}} For both the leftmost and rightmost examples, we consider the group $\mathcal{G}$ consisting of permuations of the coordinates of the vectors. This is a group with a single element $g$, the matrix representation of its linear application being the unitary matrix
\begin{equation}
    T_g = \begin{pmatrix}
        0 & 1 \\
        1 & 0
    \end{pmatrix}.
\end{equation}

In the leftmost example, we use $A = \operatorname{diag}(2, 1)$, $B_1=I$ ($B_1$ is thus $\mathcal{G}$-equivariant) and $\lambda=10$. The perturbation and its $\mathcal{G}$-average are 
\begin{equation*}
    P = \begin{pmatrix}
    -0.228 & -0.023 \\
    0.066 &  0.1
    \end{pmatrix}
\quad
    P_{\mathcal{G}} = \begin{pmatrix}
    -0.064 &   0.022 \\
    0.022 & -0.064
    \end{pmatrix},
\end{equation*}
with associated norms $\|P\|_F = 0.26$, $\|P_{\mathcal{G}}\|_F = 0.10$. The \cref{eq:pnp_fb} algorithm is ran with $\gamma = 5e-2$.

In the rightmost example, we use $A = \operatorname{diag}(2, 5e-4)$, $B_1=I$ and $\lambda=2$. The perturbation and its $\mathcal{G}$-average are 
\begin{equation*}
    P = \begin{pmatrix}
    0.0275 & 0.0244\\
    0.0112 & -0.1842
    \end{pmatrix}
,
    P_{\mathcal{G}} = \begin{pmatrix}
    -0.0783 &  0.0178 \\
    0.0178 & -0.0783
    \end{pmatrix},
\end{equation*}
with associated norms $\|P\|_F = 0.0469$, $\|P_{\mathcal{G}}\|_F = 0.0366$. The \cref{eq:pnp_fb} algorithm is ran with $\gamma = 0.2$.

\section{MC sampling and Reynolds averaging}

We compare in Table~\ref{tab:MC_compare_rebuttal} the performance of the equivariant architecture when training with the proposed Monte-Carlo (MC) scheme vs the true averaging. It shows no difference in final performance while the MC strategy decrease the computational complexity by a factor 4.

\begin{table}[t]
\centering
\footnotesize
\begin{tabular}{@{\hskip 0pt}l @{\hskip 5pt} l @{\hskip 5pt} c  @{\hskip 10pt}c@{\hskip 0pt}}
\hline
Architecture & Dataset & Monte-Carlo Sample & Reynolds Average \\ 
\hline
DnCNN & BSD10  & $30.698 \pm 1.645$ 
& $30.684 \pm 1.645$ 
\\ 
DRUNet & fastMRI & $30.678 \pm 0.740$ 
& $30.646 \pm 0.752$
\\ 
LipDnCNN & Set3C & $32.705 \pm 0.868$
& $32.706 \pm 0.868$ 
\\ \hline
\end{tabular}
\vspace{-0.7em}
\caption{Performance of algorithms from Fig.~4 when relying on Monte-Carlo estimates and averaged equivariant architectures.}
\vspace{-1em}
\label{tab:MC_compare_rebuttal}
\end{table}

\section{Equivariant algorithms}

The equivariant counterpart of \eqref{eq:pnp_fb} is
\begin{equation}
\tag{eq. PnP}
\begin{aligned}
&\text{Sample }\,g_k \sim \mathcal{G} \\
&\text{Set }\,\widetilde{\operatorname{D}}_{\G,k}(x) = T_{g_k}^{-1}\operatorname{D}(T_{g_k} x)   \\
&x_{k+1} = \widetilde{\operatorname{D}}_{\G,k}\left(x_k-\gamma A^{\top} (A x_k-y)\right).
\end{aligned}
\end{equation}

\noindent The equivariant counterpart of \eqref{eq:red_gd} is
\begin{equation}
\tag{eq. RED}
\begin{aligned}
&\text{Sample }\,g_k \sim \mathcal{G} \\
&\text{Set }\,\widetilde{\operatorname{D}}_{\G,k}(x)  = T_{g_k}^{-1}\operatorname{D}(T_{g_k} x)   \\
&x_{k+1} = x_k - \gamma A^{\top} (A x_k-y) \\ 
& \qquad \qquad \qquad \qquad - \gamma \lambda (x_k-\widetilde{\operatorname{D}}_{\G,k}(x_k)).
\end{aligned}
\end{equation}

\noindent The equivariant counterpart of \eqref{eq:ula} is
\begin{equation}
\tag{eq. ULA}
\begin{aligned}
&\text{Sample }\,g_k \sim \mathcal{G} \\
&\text{Set }\,\widetilde{\operatorname{D}}_{\G,k}(x)  = T_{g_k}^{-1}\operatorname{D}(T_{g_k} x)   \\
&x_{k+1} = x_k - \gamma A^{\top} (A x_k-y) \\ 
& \qquad \qquad - \gamma \lambda (x_k-\widetilde{\operatorname{D}}_{\G,k}(x_k)) + \sqrt{2\gamma} \epsilon_k.
\end{aligned}
\end{equation}

\end{document}